\newcommand*{\Scale}[2][4]{\scalebox{#1}{\ensuremath{#2}}}%
\newcommand\myshade{85}
\colorlet{mylinkcolor}{blue}
\colorlet{mycitecolor}{blue}
\colorlet{myurlcolor}{blue}
\newcommand{\abs}{\text{abs}}
\newcommand{\uatl}{\text{UATL}}
\newcommand{\baal}{\text{AAL}}
\newcommand{\ubaal}{\text{UAAL}}
\newcommand{\aarm}{\text{AARM}}
\newcommand{\paarm}{\text{PAARM}}
\newcommand{\uaal}{\text{UAAL}}
\newcommand{\paal}{\text{PAAL}}
\newcommand{\sat}{\text{3SAT}}
\newcommand{\gsat}{\text{SAT}}
\newcommand{\cnfsat}{\text{CNF-SAT}}
\newcommand{\dspace}{\text{DSPACE}}
\newcommand{\dtime}{\text{DTIME}}
\newcommand{\nspace}{\text{NSPACE}}
\newcommand{\ethree}{\text{E3SETS}}
\newcommand{\efour}{\text{E4SETS}}
\newcommand{\p}{\text{P}}
\newcommand{\ap}{\text{AP}}
\newcommand{\ph}{\text{PH}}
\newcommand{\np}{\text{NP}}
\newcommand{\tower}{\text{Tower}}
\newcommand{\tqbf}{\text{TQBF}}
\newcommand{\nll}{\text{NLOGSPACE}}
\newcommand{\ls}{\text{LOGSPACE}}
\newcommand{\pspace}{\text{PSPACE}}
\newcommand{\reg}{\text{REG}}
\newcommand{\bigO}{\mathcal{O}}
\newcommand{\smallO}{$o$}
\newcommand{\cA}{\mathcal{A}}
\newcommand{\cB}{\mathcal{B}}
\newcommand{\cH}{\mathcal{H}}
\newcommand{\ve}{\varepsilon}
\newcommand{\bstring}{\mathit{string}}
\newcommand{\natnum}{\mathbb{N}}
\newtheorem{remark*}[theorem]{Remark}
\begin{document}

\title{Alternating Automatic Register Machines\thanks{Z.~Gao (as RF) and S.~Jain (as Co-PI), F.~Stephan (as PI)
have been supported by the Singapore Ministry
of Education Academic Research Fund grant MOE2019-T2-2-121 / R146-000-304-112.
Furthermore, S.~Jain is supported in part by NUS grants C252-000-087-001
and E-252-00-0021-01.
Further support is 
acknowledged for the NUS tier 1 grants AcRF R146-000-337-114 (F.~Stephan as 
PI) and R252-000-C17-114 (F.~Stephan as PI).}}

\titlerunning{Alternating Automatic Register Machines}

\author{Ziyuan Gao\inst{1} \and
Sanjay Jain\inst{2} \and
Zeyong Li\inst{3} \and
Ammar Fathin Sabili\inst{2} \and
Frank Stephan\inst{1,2}}
\authorrunning{Z. Gao, S. Jain, Z. Li, A. F. Sabili, and F. Stephan}
%
\institute{Department of Mathematics, National University of Singapore, 10 Lower Kent Ridge Road, S17, Singapore 119076, Republic of Singapore \\
\email{matgaoz@nus.edu.sg} \and
School of Computing, National University of Singapore, 13 Computing Drive, COM1, Singapore 117417, Republic of Singapore
\email{\{sanjay,ammar,fstephan\}@comp.nus.edu.sg} \\
\and
Centre for Quantum Technologies, National University of Singapore,
Block S15, 3 Science Drive 2, Singapore 117543 \\
\email{li.zeyong@u.nus.edu}}

\maketitle

\begin{abstract}
This paper introduces and studies a new model of computation called an Alternating Automatic Register 	
Machine (\aarm). An \aarm\ possesses the basic features of a conventional register machine and an alternating Turing machine, but can carry out computations using bounded automatic relations in a single step. One finding is that an \aarm\ can recognise some NP-complete problems, 
including \cnfsat~(using a particular coding), in $\log^* n + \bigO(1)$ steps.  
On the other hand, if all problems in \p~can be solved by an \aarm\ in $\bigO(\log^*n)$ rounds, then
$\p \subset \pspace$.

Furthermore, we study an even more computationally powerful machine, called a Polynomial-Size Padded 
Alternating Automatic Register Machine (\paarm), which allows the input to be padded with a polynomial-size 
string. It is shown that the polynomial hierarchy can be characterised as the languages that are recognised by a \paarm\ in $\log^*n + \bigO(1)$ steps. 
These results illustrate the power of alternation when combined with computations involving 
automatic relations, and uncover a finer gradation between known complexity classes.

\keywords{Theory of Computation \and Computational Complexity \and Automatic Relation \and Register Machine \and Nondeterministic Complexity \and Alternating Complexity \and Measures of Computation Time}
	\end{abstract}

	\section{Introduction}
	\label{sec:introduction}

	
	
Automatic structures generalise the notion of regularity for languages to other mathematical objects such as 
functions, relations and groups, and were discovered independently by 
Hodgson \cite{Hodgson76,Hodgson83}, Khoussainov and Nerode \cite{KN95} as well as Blumensath and Gr\"{a}del \cite{Blumensath99,Blumensath00}.  One 
of the original motivations for studying automaticity in general 
structures came from computable structure theory, in particular 
the problem of classifying the isomorphism types of computable 
structures and identifying isomorphism invariants.  In computer 
science, automatic structures arise in the area of infinite state 
model checking; for example, Regular Model Checking, a symbolic framework 
for modelling and verifying infinite-state systems, can be expressed in Existential Second-Order Logic over automatic structures \cite{Lin20}.  Although finite-state transducers are a somewhat more popular extension of ordinary finite-state automata for defining relations between sets of strings, there are several advantages of working with automatic relations, including the following: (1) in general, automatic relations enjoy better decidability properties than finite-state transducers; for example, equivalence between ordinary automata is decidable while this is not so for finite-state transducers; (2) automatic relations are closed under first-order definability \cite{Hodgson83,Khoussainov10,KN95} while finite-state transducers are not closed under certain simple operations such as intersection and complementation.

In this paper, we introduce a new model of computation, called an {\em Alternating Automatic Register Machine (\aarm)}, that is analogous to an alternating Turing machine but may incorporate {\em bounded} automatic relations\footnote{Here an update relation is {\em bounded} if there is a constant such that each possible output is at most that constant longer than the longest input parameter; see Section \ref{sec:preliminaries}.  Since we mainly consider bounded automatic relations in this paper, such relations will occasionally be called ``automatic relations''.} into each computation step.  The main motivation is to try to discover new interesting complexity classes defined via machines where automatic relations are taken as primitive steps, and use them to understand relationships between fundamental complexity classes such as \p,~\pspace~and \np.  More powerful computational models are often obtained
by giving the computing device more workspace or by allowing non-deterministic or {\em alternating} computations, where alternation is a well-known generalisation of non-determinism.  We take up both approaches in this work, extending the notion of alternation to automatic relation computations.  An \aarm\ is similar to a conventional register machine in that it consists of a {\em register} $R$ containing a string over a fixed alphabet at any point in time, and the contents of $R$ may be updated in response to {\em instructions}.  One novel feature of an \aarm\ is that the 
contents of the register can be non-deterministically updated using 
an automatic relation.  Specifically, 
an instruction $J$ is an automatic relation.
Executing the instruction, when the content of the register $R$ is $r$,
means that the contents of $R$ is updated to any $x$ in $\{x: (x,r) \in J\}$;
if there is no such $x$, then the program halts.
Each \aarm\ contains two finite classes, denoted here as $A$ and $B$, 
of instructions; during a computation, instructions are selected alternately 
from $A$ and $B$ and executed.    

To further explain how a computation of an \aarm\ is carried out, we first 
recall the notion of an {\em alternating Turing machine} as formulated by 
Chandra, Kozen and Stockmeyer \cite{Chandra81}.  As mentioned earlier,
alternation is a generalisation of non-determinism, and it is useful for 
understanding the relationships between various complexity classes such as those 
in the polynomial hierarchy (\ph).  The computation of an alternating 
Turing machine can be viewed as a game in which two players -- Anke 
and Boris -- make moves (not necessarily strictly alternating) beginning in 
the start configuration of the machine with a given input $w$ \cite{Furer81}.  
Anke moves from existential configurations (configurations with 
an existential state), and Boris from universal configurations 
(configurations with a universal state) to successor configurations 
according to the machine's transition function.  Anke wins if, after 
a finite number of moves, an accepting configuration is reached.  The 
input $w$ is then accepted by the machine iff Anke has a winning strategy.  
Our definition of an \aarm\ computation is inspired by this game-theoretic 
interpretation of alternating Turing machines.  Given an input $w$, which 
is the string in $R$ at the start of the computation, Anke and Boris move 
in alternating turns, with Anke moving first.  During Anke's turn, she 
carries out any single instruction in the predefined set $A$, possibly 
changing the contents of the register.  Boris moves similarly during his 
turn, except that he carries out an instruction in $B$.  Anke wins if a 
configuration is reached such that Boris is no longer able to carry out 
any instructions in $B$, 
and $w$ is {\em accepted} iff Anke has a winning strategy.     
We also introduce and study Polynomial-Size Padded Bounded Alternating Automatic Register Machines (\paarm s),
which allow a polynomial-size padding to the input of an \aarm.           

The idea of defining computing devices capable of performing single-step operations that are 
more sophisticated than the basic operations of Turing machines is not new.  For example, 
Floyd and Knuth \cite{Floyd90} studied {\em addition machines}, which are finite register 
machines that can carry out addition, subtraction and comparison as primitive steps.  
{\em Unlimited register machines}, introduced by Shepherdson and Sturgis \cite{Shepherdson63}, 
can copy the number in a register to any register in a single step.  Bordihn, Fernau, Holzer, 
Manca and Mart{\'\i}n-Vide \cite{Bordihn06} investigated 
another kind of language generating device called an {\em iterated sequential transducer}, 
whose complexity is usually measured by its number of states (or {\em state complexity}).  
More recently, Kutrib, Malcher, Mereghetti and Palano \cite{Kutrib20} proposed a 
variant of an iterated sequential transducer that performs length-preserving transductions 
on left-to-right sweeps.  Automatic relations are more expressive than arithmetic operations 
such as addition or subtraction, and yet they are not too complex in that even 
one-tape linear-time Turing machines are computationally more powerful; for instance, the 
function that erases all leading 0's in any given binary word can be computed by a one-tape 
Turing machine in linear time but it is not automatic \cite{Stephan15}.  Despite the 
computational limits of automatic relations, we show in Theorem \ref{the:sat_baal} below that the NP-complete Boolean satisfiability problem can be recognised by an \aarm\ in $\log^*n+\bigO(1)$ steps, where $n$ is the length of the formula. 
The results not only show a proof-of-concept for the use of automatic relations in models of 
computation, but also shed new light on the relationships between known complexity classes.

\section{Preliminaries}
	\label{sec:preliminaries}
	
	Let $\Sigma$ denote 
a finite alphabet. We consider set operations including union ($\cup$), concatenation ($\cdot$), Kleene star ($\ast$), intersection ($\cap$) and complement ($\lnot$). Let $\Sigma^*$ denote the set of all strings over $\Sigma$.  A {\em language} is a set of strings. Let the empty string be denoted by $\varepsilon$. For a string $w \in \Sigma^*$, let $|w|$ denote the length of $w$ and $w = w_1w_2...w_{|w|}$ where $w_i \in \Sigma$ denotes the $i$-th symbol of $w$.  Fix a special symbol $\#$ not in $\Sigma$.
Let $x, y \in \Sigma^*$ such that $x = x_1x_2\ldots x_m$ and $y = y_1y_2\ldots y_n$. Let $x' = x_1'x_2'\ldots x_r'$ and $y' = y_1'y_2'\ldots y_r'$ where $r = \max(m,n)$, $x_i' = x_i$ if $i \leq m$ else $\#$, and $y_i' = y_i$ if $i \leq n$ else $\#$. Then, the {\em convolution} of $x$ and $y$ is a string over 
$(\Sigma \cup \{\#\})\times(\Sigma \cup \{\#\})$, defined as $conv(x,y) = (x_1',y_1')(x_2',y_2')\ldots (x_r',y_r')$.   
A relation $J \subseteq X \times Y$ is {\em automatic} if the set $\{conv(x,y): (x,y) \in J\}$ is regular, where the alphabet is $(\Sigma \cup \{\#\}) \times (\Sigma \cup \{\#\})$. Likewise, a function $f : X \rightarrow Y$ is {\em automatic} if the relation $\{(x,y) : x \in domain(f) \land y = f(x)\}$ is automatic  \cite{Stephan}. An automatic relation $J$ is {\em bounded} if $\exists$ constant $c$ such that $\forall (x,y) \in J, \abs(|y| - |x|) \leq c$. On the 
other hand, an unbounded automatic relation has no such restriction. The problem of determining satisfiability of any given Boolean formula in conjunctive normal form will be denoted by $\cnfsat$.  Automatic functions and relations 
	have a particularly nice feature as shown in the following theorem.
	
	\begin{theorem}[\cite{Hodgson83,KN95}]\label{the:first_order_definable}
		Every function or relation which is first-order definable from a finite number of automatic functions and relations is automatic, and the corresponding automaton can be effectively computed from the given automata.
	\end{theorem}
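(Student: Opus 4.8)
The plan is to prove this by structural induction on the first-order formula $\varphi(x_1,\ldots,x_n)$ that defines the relation, deriving the statement for a function $f$ at the end from the statement for relations applied to the graph of $f$. Throughout, an $n$-ary relation $R \subseteq (\Sigma^*)^n$ is identified with the language $L_R = \{conv(x_1,\ldots,x_n) : (x_1,\ldots,x_n) \in R\}$ over the alphabet $(\Sigma \cup \{\#\})^n$, and ``$R$ automatic'' means $L_R$ regular; with each subformula one carries along an explicit finite automaton for the corresponding convolution language, so that the automaton for $\varphi$ is obtained by a finite sequence of effective automaton constructions.

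For the base case, the relations and functions named in $\varphi$ are automatic by hypothesis, and equality has the obviously regular convolution language $\{conv(x,x) : x \in \Sigma^*\}$. An atomic subformula applies a named $k$-ary automatic relation to a tuple of variables selected, possibly with repetitions and in some order, from $x_1,\ldots,x_n$; the associated $n$-ary relation is built from the given automaton by (i) relabelling and permuting the tracks of the input alphabet, (ii) intersecting with the equality constraints imposed by repeated variables, and (iii) \emph{cylindrification}, i.e.\ adjoining inert tracks for the variables that do not occur. Each of these is an effective operation on automata. For the Boolean connectives, first bring all subformulas to the common variable tuple $x_1,\ldots,x_n$ by cylindrification; then conjunction, disjunction and negation correspond to intersection, union and complement of regular languages, computed by the product construction and, for complement, by determinisation followed by exchanging final and non-final states. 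The only subtlety is that complementation must be taken \emph{within} the set of syntactically well-formed convolutions (on each track the symbols $\#$ form a suffix, and at least one track carries no padding), which is itself a fixed regular language that one intersects with after complementing.

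The inductive step for the quantifiers is the heart of the argument, and the projection involved is the step I expect to be the main obstacle. Suppose $R(x_1,\ldots,x_n)$ is defined by $\exists y\, S(x_1,\ldots,x_n,y)$ with $S$ automatic via an automaton over $(\Sigma\cup\{\#\})^{n+1}$. One cannot simply erase the last track, because the witness $y$ may be strictly longer than every $x_i$, so $conv(x_1,\ldots,x_n,y)$ has more columns than $conv(x_1,\ldots,x_n)$ and deleting the $y$-track produces a word of the wrong shape. The fix is to construct a nondeterministic automaton for $R$ that scans $conv(x_1,\ldots,x_n)$ while guessing, column by column, the letters of $y$ and simulating the $S$-automaton on the combined columns; after exhausting the input it may take finitely many further steps, each feeding the $S$-automaton a column whose first $n$ entries are $\#$ and whose last entry is a guessed letter of $\Sigma$, accepting if the $S$-automaton is then in a final state. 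Determinising this machine yields an automaton for $R$, and the universal quantifier is obtained by rewriting $\forall y\, S$ as $\lnot\,\exists y\,\lnot S$ and reusing the negation and existential constructions.

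Since every construction used -- track relabelling and permutation, product, determinisation, complement, intersection with the fixed ``shape'' languages, and the projection-with-guessing automaton -- is effective, and only finitely many of them are composed, the automaton for $\varphi$ is computed effectively from the given automata, which also settles the effectiveness clause. Finally, a function $f$ is automatic precisely when its graph $\{(x,y): x \in domain(f) \land y = f(x)\}$ is automatic, and this graph is first-order definable from $f$, so the statement for functions is a special case of the statement for relations.
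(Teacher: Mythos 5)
The paper states this result without proof, importing it from the cited sources (Hodgson, Khoussainov--Nerode), and your argument is precisely the standard proof given there: induction on the formula, closure of regular languages under the Boolean operations, track permutation and cylindrification, and the padding-aware projection construction for the existential quantifier, which you correctly single out as the one delicate step and handle properly by guessing the overhanging letters of the witness (equivalently, enlarging the set of accepting states to those from which a final state is reachable via all-$\#$ columns on the first $n$ tracks). The only detail left implicit is that atomic subformulas may contain nested applications of the given automatic functions rather than bare variables; one flattens such terms by introducing existentially quantified auxiliary variables for the intermediate values and replacing each function by its graph, after which your treatment of atomic formulas and of $\exists$ applies verbatim, so the proof is complete and effective.
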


\section{Alternating Automatic Register Machines} 

An {\em Alternating Automatic Register Machine} (\aarm) consists of a {\em register} $R$ and two finite sets $A$ and $B$ of {\em instructions}.  $A$ and $B$ are not necessarily disjoint.  Formally, we denote an \aarm\ by $M$ and represent it as a quadruple $(\Gamma,\Sigma,A,B)$.  (An equivalent model may allow for multiple registers.)  At any point in time, the register contains a string, possibly empty, over a fixed alphabet $\Gamma$ called the {\em register alphabet}.  The current string in $R$ is denoted by $r$.  Initially, $R$ contains an input string over $\Sigma$, an {\em input alphabet} with $\Sigma \subseteq \Gamma$.  Strings over $\Sigma$ will sometimes be called {\em words}.  The contents of the register may be changed in response to an instruction.  
An instruction $J \subseteq \Gamma^* \times \Gamma^* $ is a bounded automatic relation; this changes the contents of $R$ to some $x$ such that $(r,x) \in J$ (if such an $x$ exists).  The instructions in $A$ and $B$ are labelled $I_1,I_2,\ldots,$ (in no particular order and not necessarily distinct).  A {\em configuration} is a triple $(\ell,r,w)$, where $I_{\ell}$ is the current instruction's label and $r,w \in \Gamma^*$.  Instructions are generally nondeterministic, that is, there may be more than one way in which the string in $R$ is changed from a given configuration in response to an instruction.  A {\em computation history} of an \aarm\ {\em with input $w$} for any $w \in \Sigma^*$ is a finite or infinite sequence $c_1,c_2,c_3,\ldots$ of configurations such that the following conditions hold.  Let $c_i = (\ell_i,r_i,w_i)$ for all $i$.
\begin{itemize}
\item $r_1 = w$.  We call $c_1$ the {\em initial configuration} of the computation history.
\item For all $i$, $(r_i,w_i) \in I_{\ell_i}$.  This means   
  that $I_{\ell_i}$ can be carried out using the current register contents, 
changing the contents of $R$ to   
  $w_i$. 
\item Instructions executed at odd terms of the sequence belong to $A$, while those executed at even terms   
  belong to $B$: 

\begin{quote}
\begin{center}
$I_{\ell_i} \in \left\{\begin{array}{ll}
A & \mbox{if $i$ is odd;} \\
B & \mbox{if $i$ is even.}\end{array}\right.$    
\end{center}
\end{quote}

\item If $c_{i+1}$ is defined, then $r_{i+1} = w_i$.
  In other words, the contents of $R$ are (nondeterministically) updated 
  according to the instruction and register   
  contents of the previous configuration.      
\item Suppose $i$ is odd (resp.~even) and $c_i=(I_{\ell_i},r_i,w_i)$ 
is defined.  
If there is some $I_{\ell} \in B$ (resp.~$I_{\ell} \in A$)
with $\{x: (w_i,x) \in I_{\ell}\}$ nonempty, 
then $c_{i+1}$ is defined.  
In other words, the computation continues so long as it is 
possible to execute an instruction from the appropriate set, 
either $A$ or $B$, at the current term.      
\end{itemize}

\noindent
We interpret a computation history of an \aarm\ as a sequential game between two players, Anke and Boris, where Anke moves during odd turns and Boris moves during even turns.  During Anke's turn, she must pick some instruction 
$J$ from $A$ such that $\{x: (r,x) \in J\}$ is nonempty and select some $w \in \{x: (r,x) \in J\}$; if no such instruction exists, then the game terminates.  The contents of $R$ are then changed to $w$ at the start of the next turn.  The moving rules for Boris are defined analogously, except that he must pick instructions only from $B$.  Anke {\em wins} if the game terminates after a finite number of turns and she is the last player to execute an instruction; in other words, Boris is no longer able to carry out an instruction in $B$ and the {\em length} of the game (or computation history), measured by the total number of turns up to and including the last turn, is odd.  The condition for Boris to win is defined symmetrically. 
A draw game is one such that no player wins in finitely many turns, that is, the game runs forever. 
The \aarm\ {\em accepts} a word $w$ if Anke can move in such a way that she will always win a game with an initial configuration $(\ell,w,v)$ for some $I_{\ell} \in A$ and 
$v \in \Gamma^*$, regardless of how Boris moves.  To state this acceptance condition more formally, one could define Anke's and Boris' {\em strategies} to be functions $\mathcal{A}$ and $\mathcal{B}$ respectively with 
$\mathcal{A}:(\natnum \times \Gamma^* \times \Gamma^*)^* \times \Gamma^* \mapsto A \times \Gamma^*$ and 
$\mathcal{B}:(\natnum \times \Gamma^* \times \Gamma^*)^* \times \Gamma^* \mapsto B \times \Gamma^*$, which map each segment of a computation history together with the current contents of $R$ to a pair specifying an instruction as well as the new contents of $R$ at the start of the next round according to the moving rules given earlier.  The \aarm\ accepts $w$ if there is an $\mathcal{A}$ such that for every 
$\mathcal{B}$, there is a finite computation history $\langle c_1,\ldots,c_{2n+1}\rangle$ where
\begin{itemize}
\item $c_i = (\ell_i,r_i,w_i)$ for each $i$, 
\item $r_1 = w$, 
\item $\mathcal{A}((\langle c_{i}: i < 2j+1 \rangle,r_{2j+1})) = (I_{\ell_{2j+1}},w_{2j+1})$ for each 
  $j \in \{0,\ldots,n\}$,
\item $\mathcal{B}((\langle c_{i}: i < 2k \rangle,r_{2k})) = (I_{\ell_{2k}},w_{2k})$ for each 
  $k \in \{1,\ldots,n\}$; 
\item there is no move for $B$ in $c_{2n+1}$, that is 
no instruction in $B$ contains a pair of the form $(w_{2n+1},\cdot)$.
\end{itemize}
Here $\langle c_{i}: i < k\rangle$ denotes the sequence $\langle c_1,\ldots,c_{k-1}\rangle$, which is empty if 
$k \leq 1$.  Such an $\mathcal{A}$ is called a {\em winning strategy for Anke with respect to $(M,w)$}.  Given a winning strategy $\mathcal{A}$ for Anke with respect to $(M,w)$ and any strategy $\mathcal{B}$, the corresponding computation history of $M$ with input $w$ is unique and will be denoted by 
$\mathcal{H}(\mathcal{A},\mathcal{B},M,w)$.  In most subsequent proofs, $\mathcal{A}$ and $\mathcal{B}$ will generally not be defined so formally.  Set 

\begin{quote}
\begin{center}
$L(M) := \{w \in \Sigma^*: \mbox{$M$ accepts $w$}\}$;
\end{center}
\end{quote}

\noindent  
one says that $M$ {\em recognises} $L(M)$.       
Note that even though we have given the description of
\aarm~via alternation of moves by Anke and Boris, 
it is possible to define games where strict alternation is not needed.
Furthermore, a constant amount of extra state information can be 
stored in the register.  

\begin{definition}[Alternating Automatic Register Machine Complexity]\label{defn:aal}
Let $M = (\Gamma,\Sigma,A,B)$ be an \aarm\ and let $t \in \natnum_0$.  
For each $w \in \Sigma^*$, $M$ {\em accepts $w$ in time $t$} if Anke has a winning
strategy $\mathcal{A}$ with respect to $(M,w)$ such that for any strategy $\mathcal{B}$
played by Boris, the length of $\cH(\cA,\cB,M,w)$ is not more than $t$.  
(As defined earlier, $\cH(\cA,\cB,M,w)$ is the computation history of $M$ with input $w$ 
when $\cA$ and $\cB$ are applied.)  

An \aarm\ decides a language $L$ in $f(n)$ steps for a function $f$ depending on the length $n$ of the input if for all $x \in \{0,1\}^n$, both players can enforce that the game terminates within $f(n)$ steps by playing optimally and one player has a winning strategy needing at most $f(n)$ moves and $x \in L$ if Anke is the player with the winning strategy. 
$\baal[f(n)]$ denotes the family of languages decided by \aarm s that decide in time $f(n)$.  

\end{definition}

We begin with several
examples to illustrate \aarm s and how they carry out computations.

\begin{example}\label{exmp:aal}
\begin{enumerate}
\item If $A$ and $B$ are both empty, then every computation history of the corresponding \aarm\ is empty; thus 
  the \aarm\ does not accept any input.   

\item Suppose $\Gamma = \{0,1\}$.  Let $A$ consist of the single instruction 
  $I_1$ which consists of 
  all pairs $(y,x) \in \Gamma^* \times \Gamma^*$ with $x = y$, and 
  suppose $B$ is empty.  
It is easy to see that $I_1$ is an automatic relation.  
Since $\{x: (v,x) \in I_1\}$ is nonempty for any $v$ and $B$ is empty, 
it follows that every  computation history with any input $w$ always ends at 
the initial configuration $(1,w,w)$. 
The corresponding \aarm\ thus accepts {\em any} binary word. 
          
\item Suppose $\Gamma = \{0,1,\overset{\bullet}{0},\overset{\bullet}{1},\blacksquare,\bigstar\}$.  
  $\overset{\bullet}{i}$ is a ``marked'' version of $i$ for $i \in \{0,1\}$; $\blacksquare$ is a special symbol   
  indicating a rejection of the input, while $\bigstar$ indicates an acceptance of the input.  Let $A$ consist of the   
  instructions $I_1$ and $I_2$.
  $I_1 \subseteq \{0,1,\overset{\bullet}{0},  
  \overset{\bullet}{1}\}^* \times \{0,1,\overset{\bullet}{0},\overset{\bullet}{1}\}^*$ is the relation containing all pairs $(y,x)$ such that 
  $y \in \{0,\overset{\bullet}{0}\}^*\cdot\{1,\overset{\bullet}{1}\}^*$ and $x$ is the string obtained from $y$   
  by marking every alternate unmarked $0$ starting with the first unmarked $0$ 
  as well as every alternate unmarked  
  $1$ starting with the first unmarked $1$, while leaving other symbols unchanged.  For example, 
$
(0  0  0  1  1, \overset{\bullet}{0}0\overset{\bullet}{0}\overset{\bullet}{1}1
)   \in I_1,
(\overset{\bullet}{0} 0 \overset{\bullet}{0} \overset{\bullet}{1} 1 \times \overset{\bullet}{0} 0 \overset{\bullet}{0} \overset{\bullet}{1} 
\overset{\bullet}{1})  \notin I_1.
$
%

$I_2 \subseteq \{\overset{\bullet}{0},\overset{\bullet}{1},\bigstar\}^* \times \Gamma^*$ contains all 
  pairs $(y,x)$ such that $y$ does not have any unmarked $0$'s or $1$'s and $x = y \cdot \bigstar$. 

  Let $B$ consist of the instruction $I_3$, where
  $I_3 \subseteq \{0,1,\overset{\bullet}{0},\overset{\bullet}{1}\}^* \times \Gamma^*$
  is the relation containing all pairs $(y,x)$ such that $x = y \cdot \blacksquare$ if the total number of unmarked   
  $0$'s and $1$'s in $y$ is odd and $x = y$ otherwise.  
  One can show as before that $I_1,I_2$ and $I_3$ are automatic.  Let $w \in \{0,1\}^*$ be any given input.  If $0$ 
  occurs to the right of some $1$ in $w$, then each computation history with input $w$ is empty, so the \aarm\ 
  does not accept $w$.  If $w$ is of the shape $0^n 1^m$ with $n \neq m$, then, after an odd number of   
  turns, the total number of unmarked $0$'s or $1$'s will be odd, and executing $I_3$ from $B$ during the next 
  turn will result in $R$ containing a string with the symbol $\blacksquare$.  No instructions in $A$ can 
  subsequently be carried out.  If $w$ is of the shape $0^n 1^n$, then eventually all the symbols in $w$ will be 
  marked and $I_2$ can be carried out, resulting in $R$ containing a string with the symbol $\bigstar$.  No 
  instructions in $B$ can then be executed.  The corresponding \aarm\ thus accepts the language of all binary 
  words of the shape $0^n1^n$.  
  
\end{enumerate}
\end{example}

\noindent
We first establish the closure of $\baal[\bigO(f(n))]$ under the usual set-theoretic Boolean operations as well as the regular operations.

\begin{theorem}\label{thm:closurebaal}
		Let $\star$ denote any one of the operations union, intersection and concatenation.
		For each $L_1 \in \baal[\bigO(f(n))]$ and each $L_2 \in \baal[g(n)]$, $L_1 \star L_2 \in \baal[\max\{f(n),g(n)\}]$
		and $L_1^* \in \baal[\bigO(f(n))]$.
		In particular, the languages in $\baal[\bigO(f(n))]$ 
		are closed under union, intersection, concatenation 
	    and Kleene star. 

	\end{theorem}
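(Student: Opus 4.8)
The plan is to realise each operation by a game-theoretic reduction: from \aarm s $M_1$ and $M_2$ recognising $L_1$ and $L_2$ within the stated bounds, build a single \aarm\ $M$ whose acceptance condition for Anke encodes membership in the combined language, adding only $\bigO(1)$ coordination rounds. Two facts about bounded automatic relations are used throughout: they are closed under first-order definability (Theorem~\ref{the:first_order_definable}), and ``apply a bounded automatic relation to a designated part of the register while the remainder is kept fixed but shifted by $\bigO(1)$'' is again a bounded automatic relation --- a finite automaton needs only $\bigO(1)$ symbols of lookbehind to verify the shifted copy --- so a constant amount of control data (a ``mode'' symbol naming the sub-machine being simulated, and a delimiter $\dagger\notin\Gamma_1\cup\Gamma_2$ separating the active region from the frozen remainder) can be maintained without leaving the class. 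For \emph{union}, Anke should pick which of $L_1,L_2$ to verify: $M$'s first $A$-move simultaneously records a mode $i\in\{1,2\}$ and applies one of $M_i$'s first $A$-instructions to the input $w$, and thereafter $A$ (respectively $B$) consists of the mode-$i$-guarded $A$-instructions (respectively $B$-instructions) of $M_1$ and $M_2$, each reading ``if the mode is $i$, apply this $M_i$-instruction to the active part, else do nothing''. A play of $M$ in mode $i$ is then literally a play of $M_i$, so Anke wins iff the mode she chose has $w\in L_i$, i.e.\ iff $w\in L_1\cup L_2$, with no round overhead. \emph{Intersection} is dual: prepend one trivial round in which Anke's only move leaves $w$ essentially intact and records ``awaiting choice'', from which Boris' only moves set the mode to $1$ or $2$; from round~$3$ on, $M$ simulates $M_i$ with Anke moving first. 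Since $2+k$ is odd exactly when $k$ is, Anke moves last in $M$ iff she wins the $M_i$-game Boris chose, and as Boris picks the $i$ worst for her, this happens iff $w\in L_1\cap L_2$; in both cases both players can still enforce termination because they can in each $M_i$.

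Concatenation and Kleene star use an ``Anke cuts, Boris challenges a piece'' pattern. For $L_1\cdot L_2$, Anke's first move marks a position $j$ of $w$ in place, committing to the factorisation $w=uv$ with $u=w_1\cdots w_{j-1}$ and $v=w_j\cdots w_n$ (degenerate cases $u=\ve$, $v=\ve$, $w=\ve$ get dedicated instructions). Boris then picks a side, inserts $\dagger$ at the cut, and designates as active the part before $\dagger$ (to dispute $u\in L_1$) or the part after $\dagger$ (to dispute $v\in L_2$); $M$ then simulates $M_1$ (respectively $M_2$) on the active part while the other part sits frozen across $\dagger$, shifting by $\bigO(1)$ per step --- which by the remark above stays within the bounded automatic relations, with Theorem~\ref{the:first_order_definable} handling the mode-guarding, the finite unions of instructions, and the regular preconditions. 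The round count and winner are analysed as in the intersection case, and correctness is ``$w\in L_1\cdot L_2$ iff some factorisation survives every challenge''. Kleene star is the same with ``one cut'' replaced by ``a set of cuts'': Anke marks the block starts of a decomposition $w=w^{(1)}\cdots w^{(k)}$ (position~$1$ always marked), Boris selects a block $\ell$ and delimits it by $\dagger$'s, and $M$ simulates $M_1$ on block~$\ell$ while the remaining blocks are frozen; $\ve\in L_1^*$ is covered by one extra ``accept at once'' instruction and empty blocks may be disallowed without loss of generality.

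For the time bounds, a winning play of $M$ always consists of a winning play of one simulated machine on a word of length at most $n$, preceded by at most two coordination rounds; taking the bounding functions non-decreasing as usual, this yields $L_1\star L_2\in\baal[\max\{f(n),g(n)\}+\bigO(1)]$ and $L_1^*\in\baal[\bigO(f(n))]$, which is the asserted closure once one recalls that $\baal[\bigO(h(n))]$ absorbs additive constants (and the $\bigO$-slack already implicit in $L_1\in\baal[\bigO(f(n))]$). I expect the main obstacle to be the faithfulness of the ``simulate $M_i$ on the active region, freeze the rest'' step: one has to check both that the induced instructions really are \emph{bounded automatic} --- which is where the $\bigO(1)$-shift observation and Theorem~\ref{the:first_order_definable} are used, and why the delimiter $\dagger$ is taken outside $\Gamma_1\cup\Gamma_2$ so that the active region is always unambiguously located --- and that the ``no legal move from the relevant instruction set'' halting condition together with the parity of each simulated $M_i$-game are transported into $M$ correctly, so that ``Anke moves last in $M$'' coincides with ``Anke wins the selected sub-game''.
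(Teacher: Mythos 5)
Your construction follows essentially the same route as the paper's proof: Anke chooses the branch for union, Boris chooses it for intersection, and for concatenation (respectively Kleene star) Anke marks one cut (respectively all cuts) after which Boris challenges a single piece, with the selected sub-machine then simulated on a tagged active region of the register so that only $\bigO(1)$ coordination rounds are added and the move parity is preserved. One caution on wording: the mode-guarded instructions must be \emph{inapplicable} (offer no legal successor) when the mode is wrong, not ``do nothing'' in the sense of the identity relation --- otherwise either player could stall forever and the termination and parity analysis would break --- but your closing remarks about transporting the ``no legal move'' halting condition make clear this is what you intend.
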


\begin{proof}
Suppose $L_1 \in \baal[f(n)]$ and $L_2 \in \baal[g(n)]$.  Let $M_1 = (\Gamma,\Sigma,A,B)$
		and $M_2 = (\Gamma',\Sigma,A',B')$ be \aarm s such that $L_1 = L(M_1)$, $L_2 = L(M_2)$
		and $M_1$ and $M_2$ recognise in times $f(n)$ and $g(n)$ respectively.  We provide the formal details
		for constructing an \aarm\ $M_3 = (\Gamma'',\Sigma,A'',B'')$ recognising $L_1 \cup L_2$ in time 
		$\max\{f(n),g(n)\}$, and give informal descriptions of the \aarm s for the other operations.

		\begin{itemize}
\setlength\itemsep{.5em}

		\item The components of $M_3$ are defined as	 follows.  Suppose $\boxplus \notin \Gamma \cup \Gamma'$
and $\lozenge \notin \Gamma \cup \Gamma'$.  

\begin{enumerate}[label=(\roman*)]

\item $\Gamma'' = \Gamma \cup \Gamma' \cup \{\boxplus,\lozenge\}$.

\item\label{step1} Let $J$ (resp.~$J'$) be the relation consisting of all pairs $(x,\boxplus x)$ 
  (resp.~$(x,\linebreak[3]\lozenge x)$) with $x \in \Sigma^*$ and put 
$J$ 
  (resp.~$J'$) into $A''$.

\item\label{step2} Let $K$ (resp.~$K'$) be the relation consisting of all pairs $(x,\boxplus x,)$ 
  (resp.~$(x,\linebreak[3]\lozenge x)$) with $x \in \boxplus \cdot \Sigma^*$ (resp.~$x \in \lozenge \cdot \Sigma^*$) 
  and put $K$ (resp.~$K'$) into $B''$.

\item\label{step3} For every instruction $L$ in $A$ (resp.~$B$), let $L'$ be 
the   
  relation such that $(\boxplus \boxplus y, \boxplus \boxplus x) \in L' \Leftrightarrow (y,x) \in L$, and put 
  $L'$ into $A''$ (resp.~$B''$).

\item\label{step4} For every instruction $L$ in $A'$ (resp.~$B'$), let $L'$ be the relation   
  such that $(\lozenge \lozenge y, \lozenge \lozenge x) \in L' \Leftrightarrow (y,x) \in L$, and put $L'$ into $A''$ (resp.~$B''$).

\end{enumerate}    

		Given any $w \in L_1 \cup L_2$, a winning strategy for Anke with respect to $(M_3,w)$ is as follows.  At the start of the game, Anke determines whether $w \in L_1$ or $w \in L_2$ (if $w \in L_1 \cap L_2$, then Anke can just follow the subsequent steps for the case $w \in L_1$).  Assume $w \in L_1$.  The first move of Anke is to change the contents of $R$ from $w$ to $\boxplus w$ using one of the instructions defined in \ref{step1}.  During the second turn, the only move Boris can make is to change the current contents of $R$ by prepending it with $\boxplus$, using one of the instructions defined in \ref{step2}.  The third turn starts with $\boxplus\boxplus w$ in $R$.  Anke then adapts a winning strategy with respect to $(M_1,w)$, using the instructions defined in \ref{step3}; the only difference between the current strategy and that for $(M_1,w)$ is that the current one applies to the register's contents prepended with $\boxplus\boxplus$.  From the third turn onwards, Anke and Boris can only apply instructions defined in \ref{step3} and so the game would proceed as in a computation history of $M_1$ with input $w$ (except that the register's contents are prepended with $\boxplus\boxplus$).  In the case that $w \in L_2$, Anke proceeds similarly, except that the third turn starts with $\lozenge\lozenge w$ in $R$ and she would adapt a winning strategy with respect to $(M_2,w)$.  In either case, the length of the game is 
$\bigO(\max\{f(|w|),g(|w|)\})+2 = \bigO(\max\{f(|w|),g(|w|)\})$.  


Suppose $w \notin L_1 \cup L_2$.  If $R$ contains $\boxplus\boxplus w$ (resp.~$\lozenge\lozenge w$) at the start of the third turn, then the game would proceed in a way similar to that of a computation history of $M_1$(resp.~$M_2$) with input $w$, that is, either it terminates with Boris making the last move or it never terminates.  Therefore $L(M_3) = L_1 \cup L_2$ and $M_3$ recognises in time $\bigO(\max\{f(n),g(n)\})$.  

	\item For $L_1 \cap L_2$, a technique similar to that in (i) can be applied.  $A''$ will have an instruction allowing Anke to skip the first turn and $B''$ will have an instruction allowing Boris in the second turn to ``pick'' $L_1$ or $L_2$ (say by prepending the input with a special symbol).  If Boris picks $L_1$, then the remainder of the game would proceed as in a computation history of $M_1$ with input $w$; otherwise, it proceeds as in a computation history of $M_2$ with input $w$.   

	\item For $L_1 \cdot L_2$, 
Anke would first mark a splitting point 
in the input, then 
Boris will pick which side he wants to verify.    

	\item For $L_1^*$, 
Anke would first mark all splitting points, then 
Boris will pick a single string he wants to verify.
\end{itemize}
\end{proof}

\noindent
We now show that the family of all languages recognised by some \aarm\ coincides with the family of all recursively enumerable (r.e.) languages.  First, it is shown that any conventional register machine can be simulated with an \aarm.
We use a slight variant of Shepherdson and Sturgis' \cite{Shepherdson63} unlimited register machine. 

\begin{proposition}\label{prop:simregmachine}
Consider any register machine $M'$ with $n$ registers $R_1,\ldots,R_n$ and a program
consisting of a sequence $I_1,\ldots,I_k$ of instructions, each taking one of the following forms:
\begin{itemize}
\item $R^+ \rightarrow I'$, which adds $1$ to the contents of register $R$ and jumps to instruction $I'$;
\item $R^- \rightarrow I',I''$, which subtracts $1$ from the contents of register $R$ and jumps to instruction 
  $I'$ if the current number in $R$ is positive, else jumps to instruction $I''$; 
\item \textsc{HALT}, which stops any further instructions from being carried out. 
\end{itemize}
The contents of the registers at any point in time are binary strings, each representing a natural number.
Then there is an \aarm\ $M$ such that for each binary word $w$, $M$ accepts $w$ iff $M'$ halts when $R_1$ initially contains $w$ and $R_i$ initially contains $0$ for each $i > 1$.
\end{proposition}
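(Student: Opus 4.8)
The plan is to encode the configuration of the register machine $M'$ — i.e.\ the contents of all $n$ registers $R_1,\ldots,R_n$ together with the index of the current instruction — as a single string in the \aarm's register $R$, and then simulate each instruction of $M'$ by a single \aarm\ instruction. Since $M'$ is deterministic, alternation is not really needed here; Boris will essentially be a ``spectator'' who can never move, so that Anke alone drives the computation. Concretely, I would let the register alphabet $\Gamma$ include $\Sigma=\{0,1\}$, a separator symbol $\,\$\,$, and a set of state markers $\{q_1,\ldots,q_k,q_{\textsc{halt}}\}$, and represent a configuration by a string such as $q_j\,\$\,u_1\,\$\,u_2\,\$\cdots\$\,u_n$, where $u_i$ is the binary encoding of the current contents of $R_i$ and $q_j$ the current instruction label. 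On input $w$, Anke's first move maps $w$ to the initial configuration $q_1\,\$\,w\,\$\,0\,\$\cdots\$\,0$.

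The key step is to check that each instruction type of $M'$ corresponds to a \emph{bounded} automatic relation on configuration strings, so it can serve as an \aarm\ instruction. For $R_m^+\to I'$: the relation maps $q_j\,\$\,u_1\,\$\cdots\$\,u_m\,\$\cdots\$\,u_n$ to $q_{I'}\,\$\,u_1\,\$\cdots\$\,(u_m{+}1)\,\$\cdots\$\,u_n$, where $j$ is the label of this instruction. Incrementing a binary number, leaving all other fields untouched, is first-order definable from the automatic successor relation and hence automatic by Theorem~\ref{the:first_order_definable}; and it is bounded, since $|u_m{+}1|\le|u_m|+1$. Similarly $R_m^-\to I',I''$ is realised by a relation that, depending on whether $u_m$ encodes $0$, either sets the state to $q_{I''}$ and leaves the registers unchanged, or decrements $u_m$ and sets the state to $q_{I'}$ — again first-order definable and length-bounded. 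The \textsc{HALT} instruction $I_j$ is realised by a relation mapping $q_j\,\$\cdots$ to $q_{\textsc{halt}}\,\$\cdots$ (or, more simply, one can just treat any configuration with state $q_j$ as a dead end for $B$). I would put all these relations, together with a ``no-op / pad'' relation that lets Anke idle, into $A$; I put into $B$ a single relation that acts as the identity \emph{unless} the current state is $q_{\textsc{halt}}$, in which case $\{x:(r,x)\in B\text{-instruction}\}$ is empty — so Boris can always move until the computation reaches \textsc{HALT}, at which point Boris is stuck and Anke (having just moved) wins.

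There is one bookkeeping wrinkle: the game strictly alternates $A,B,A,B,\ldots$, so after every genuine simulation step by Anke, Boris gets a turn; I would have Boris's only legal move be the identity-on-configurations relation above, so that the configuration is unchanged and Anke can make the next simulation step. Another wrinkle is that if $M'$ runs forever, the \aarm\ game never terminates, which matches the requirement (``$M$ accepts $w$ iff $M'$ halts''): a non-terminating game is a draw, not a win for Anke. Conversely, if $M'$ halts, Anke wins after a finite (though unbounded) number of rounds, so there is no fixed time bound — consistent with the proposition, which only asserts acceptance, not a complexity bound. The routine part is spelling out the automatic relations and verifying boundedness; the one place to be slightly careful is making sure the parity works out so that the move immediately \emph{after} reaching state $q_{\textsc{halt}}$ is Boris's — which is forced since Anke is the player who performs the \textsc{HALT} step, hence it is Boris's turn next and he has no move, giving Anke a win. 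The main obstacle, such as it is, is purely one of careful encoding: ensuring that every field-update on the $\$$-delimited configuration string is genuinely first-order definable from the basic automatic relations on $\Gamma^*$ (string equality, the successor of a binary numeral, and the constant predicates for state symbols), so that Theorem~\ref{the:first_order_definable} applies and the resulting instructions are legitimately bounded automatic relations.
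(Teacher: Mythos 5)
Your proposal is correct and follows essentially the same route as the paper's proof: encode the full configuration (current instruction label plus all register contents, separated by a special symbol) as the \aarm\ register string, let each instruction of $M'$ become a bounded automatic relation in $A$ driven by Anke, and give Boris a single identity instruction that becomes undefined once the halting marker appears, so that Anke wins exactly when $M'$ halts and the game is an unending draw otherwise. The only differences are cosmetic (the paper signals halting by overwriting the register with $\bigstar^{|y|}$ rather than with a distinguished halt-state symbol).
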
    

\def\proofofsimregmachine{
\begin{proof}
For any natural number $k$, let $\bstring(k)$ denote the binary representation of $k$.  Let $M$ be an \aarm\ with register $R$ and instruction sets $A$ and $B$.  Let $\Gamma = \{0,1,\#,\bigstar\}$.  $R$ will simulate the current instruction being read by $M'$ as well as the current contents of $R_1,\ldots,R_n$; each of these simulated items will be separated with the symbol $\#$.  Suppose $R_1$ initially contains $w$ and $R_i$ initially contains $0$ for each $i > 1$.  We may assume without loss of generality that $w \neq \varepsilon$ and $R$ initially contains the string 

\begin{quote}
\begin{center}
$\bstring(1)\# w \# 0 \# \ldots \# 0$,  
\end{center}
\end{quote}

\noindent     
which contains $n$ occurrences of $\#$.  For each $I_i$, $A$ will contain a corresponding instruction $I'_i$ defined according to the following case distinction:

\begin{description}
\setlength\itemsep{1em}

\item[$I_i := R_j^+ \rightarrow I_k$:] Let $I_i'$ be the relation 
consisting of all pairs $(y,x)$ such that $y$ is 
  of the form $\bstring(i) \# \bstring(i_1) \# \bstring(i_2) \# \ldots \# \bstring(i_n)$ and $x$ equals 
  $\bstring(k)\# \bstring(i_1) \# \ldots$ $\# \bstring(i_j+1) \# \ldots \# \bstring(i_n)$.
This instruction simulates, when running $M'$, the addition of $1$   
  to $R_j$ and the jump from $I_i$ to $I_k$.

\item[$I_i := R_j^- \rightarrow I_k,I_{\ell}$:] Let $I_i'$ be the relation consisting of all pairs $(y,x)$ such that
  $y$ is of the form $\bstring(i) \# \bstring(i_1) \# \bstring(i_2) \# \ldots \# \bstring(i_n)$; if $y$ is of this   
  form, then $i_j = 0$ implies $x$ equals $\bstring(\ell) \# \bstring(i_1) \# \ldots \# \bstring(i_n)$, that is, only 
  the first segment is changed and it becomes $\bstring(\ell)$, and $i_j > 0$ implies $x$ equals 
  $\bstring(k) \# \bstring(i_1) \# \ldots$ $\bstring(i_j-1) \# \ldots \# \bstring(i_n)$, that is, the first segment is   
  changed to $\bstring(k)$ and the $(j+1)$-st segment is changed to $\bstring(i_j-1)$.  
If $R_j$ currently contains a positive number, then $I'_i$ simulates the  
  subtraction of $1$ from the number currently in $R_j$ and the jump from $I_i$ to $I_k$; otherwise, $I'_i$  
  simulates the jump from $I_i$ to $I_{\ell}$.      

\item[$I_i := \textsc{HALT}$:] Let $I_i'$ be the relation consisting of all pairs $(y,x)$ such that $y$ is of the 
  form $\bstring(i) \# \bstring(i_1) \# \bstring(i_2) \# \ldots \# \bstring(i_n)$ and $x$ equals $\bigstar^{|y|}$.  
$I'_i$ changes the contents of $R$ to a nonempty string over 
  $\{\bigstar\}$, signalling that $M'$ has halted.      
\end{description}

$B$ consists of the single instruction $R' \in \{x: (R',x) \in N\}$, where 
$N \subseteq (\Gamma \setminus \{\bigstar\})^* \times \Gamma^*$ is the relation consisting of all pairs 
$(y,x)$ such that $x = y$.  Suppose $M'$ eventually halts when $R_1$ initially contains $w$ and $R_i$ contains $0$ for each $i > 1$.  Anke carries out instruction $I'_i$ whenever $I_i$ applies to $M'$, starting with $i = 1$.  Boris can only carry out one instruction, which leaves the contents of $R'$ unchanged.  $M'$ halts only when some $I_i := \textsc{HALT}$ is applied; Anke then applies $I'_i$, which changes the contents of $R'$ to a nonempty string over $\{\bigstar\}$.  Since Boris cannot execute any instruction when $R'$ contains a string with an occurrence of $\bigstar$, the game will always end with Anke.  On the other hand, if $M'$ never halts, then the contents of $R'$ will never change to a string containing $\bigstar$ and so Boris can always keep the game going.  Thus $M$ accepts $w$ iff $M'$ halts.
\end{proof}
}

\proofofsimregmachine

\noindent
It remains to show that an \aarm\ recognises only r.e.\ languages.  The proof of this fact relies on the boundedness assumption for automatic relations used in register updates of an \aarm.  

\begin{proposition}\label{prop:baalre}
For each \aarm\ $M$, $L(M)$ is r.e.
\end{proposition}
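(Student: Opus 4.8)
The plan is to show that the set of accepted words is recursively enumerable by giving a semi-decision procedure: a Turing machine that, on input $w$, searches for a witness that Anke has a winning strategy with respect to $(M,w)$ and halts (accepting) exactly when such a witness is found. The key structural fact that makes this work is the boundedness assumption: if $M = (\Gamma,\Sigma,A,B)$ and each instruction changes the register length by at most a constant $c$, then after $t$ turns the register contents have length at most $|w| + ct$. Consequently, from the initial configuration with input $w$, there are only finitely many configurations reachable within any fixed number $t$ of turns, and each single move has only finitely many outcomes (each $\{x : (r,x)\in I_\ell\}$ is a regular set of strings of length $\le |r|+c$, hence finite and effectively computable from the automaton for $I_\ell$ using Theorem \ref{the:first_order_definable}). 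So the game tree truncated at depth $t$ is finite and computable.

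The first step is to formalise ``Anke wins'' as the existence of a finite game tree. A winning strategy for Anke that terminates within $t$ moves corresponds to a finite subtree $T$ of the game tree rooted at some initial configuration $(\ell,w,v)$ with $I_\ell \in A$, such that: (i) at each Anke-node (odd level) $T$ contains exactly one child, chosen among the legal moves; (ii) at each Boris-node (even level) $T$ contains \emph{all} legal children; (iii) every leaf of $T$ is at an odd level and is a configuration from which Boris has no legal move; and (iv) $T$ has depth at most $t$. Since a finite winning strategy for Anke, if one exists, has some finite depth bound $t$, and since for each $t$ the collection of candidate depth-$t$ trees is finite and enumerable (because each node has finitely many computable successors), the procedure is: for $t = 1, 2, 3, \ldots$, enumerate all finite trees of depth $\le t$ over the reachable configuration space, check each for properties (i)--(iv), and halt if one is found.

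The second step is to argue correctness: if $w \in L(M)$, then by definition Anke has a winning strategy against every Boris strategy. Here one must be slightly careful, because $L(M)$ as defined earlier only requires that for \emph{every} Boris strategy the resulting play is finite and won by Anke, without an a priori uniform bound on its length. However, by König's Lemma applied to the tree of plays consistent with Anke's strategy --- a finitely branching tree (finitely many Boris moves at each Boris-node, since each successor set is finite) with no infinite branch (every consistent play terminates with Anke winning) --- that tree is finite, so there is a uniform bound $t$ on the length of all such plays. Pruning to Anke's single choice at each Anke-node yields exactly a tree $T$ witnessing (i)--(iv) for that $t$, so the search succeeds. Conversely, any $T$ found by the search is literally a finite description of a winning strategy for Anke (respond according to $T$ at configurations in $T$; respond arbitrarily elsewhere, which is irrelevant since play never leaves $T$), so $w \in L(M)$. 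Hence the procedure halts on input $w$ iff $w \in L(M)$, proving $L(M)$ is r.e.

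The main obstacle I anticipate is precisely the subtlety around the uniform length bound: the acceptance definition quantifies over all Boris strategies and only asks each individual play to be finite, so one needs the König's Lemma argument (relying on finite branching, which relies on boundedness) to convert ``all plays finite'' into ``all plays of length $\le t$ for a single $t$'' that the enumeration can search for. A secondary, more routine point is verifying that the successor sets $\{x : (r,x) \in I_\ell\}$ are effectively computable and finite; this follows from boundedness together with the effectiveness in Theorem \ref{the:first_order_definable}, but it should be stated explicitly since it is the place where the boundedness hypothesis is used (without it, the register could grow unboundedly in one step and the game tree would not be finitely branching).
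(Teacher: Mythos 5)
Your proposal is correct and follows essentially the same route as the paper's proof: boundedness gives a finitely branching game tree, K\"onig's lemma converts ``every play against Anke's winning strategy is finite'' into a uniform depth bound $d$, and a semi-decision procedure searches over $d$ and verifies a winning strategy of that depth. The only (immaterial) difference is that you verify the bound by explicitly enumerating finite strategy trees, whereas the paper invokes Theorem~\ref{the:first_order_definable} to decide recursively whether Anke wins within $d$ steps.
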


\def\proofofbaalre{
\begin{proof}
Suppose $w$ is given as an input to $M$.  The family of all computation histories of $M$ with input $w$ may be viewed as a tree $\tau$ consisting of a root node $(0,\ve,\ve)$ and all possible configurations (each represented as a node) of such a computation history.  
Each maximal path in $\tau$ starting with the root node is a concatenation of the root node and a computation history of $M$ with input $w$. 
Owing to the boundedness of all automatic relations used in the instructions of $M$, $\tau$ is finitely branching.  Suppose Anke has a winning strategy with respect to $(M,w)$.  Consider the subtree $\tau'$ of all computation histories prepended with the root node $(0,\ve,\ve)$ such that Anke always moves according to a fixed winning strategy.  If $\tau'$ were infinite, then by K\"{o}nig's lemma, $\tau'$ would have an infinite branch and Boris would win the game represented by this branch, contradicting the assumption that Anke is applying a winning strategy.  Thus there is some $d$ such that all computation histories when Anke is applying a winning strategy have length at most $d$.  An algorithm can guess the value of $d$ and, 
by Theorem \ref{the:first_order_definable}, determine recursively that Boris can no longer make a valid move when the length of the game has reached $d$ no matter how he plays, and accept $w$.  On the other hand, if Anke does not have a winning strategy, then no such $d$ exists and so the algorithm would never accept $w$.  Such an algorithm would therefore recognise the set of all words for which Anke has a winning strategy.              
\end{proof}
}

\proofofbaalre

\noindent
Propositions \ref{prop:simregmachine} and \ref{prop:baalre} together imply the desired result.

\begin{theorem}\label{thm:baalre}
The family of languages recognised by \aarm s is precisely the family of r.e.\ languages. 
\end{theorem}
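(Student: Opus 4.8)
The plan is to obtain the two inclusions separately and then combine them; the substantive work has already been carried out in Propositions~\ref{prop:simregmachine} and~\ref{prop:baalre}, so what remains is essentially bookkeeping. For the inclusion ``every \aarm-recognisable language is r.e.'', I would simply invoke Proposition~\ref{prop:baalre}, which states exactly that $L(M)$ is r.e.\ for every \aarm\ $M$; nothing further is required.

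For the reverse inclusion ``every r.e.\ language is \aarm-recognisable'', I would start from the classical fact that the Shepherdson--Sturgis unlimited register machine model --- equivalently, counter machines with the increment, decrement-and-branch, and \textsc{HALT} instructions described in Proposition~\ref{prop:simregmachine} --- is Turing-complete, so that every r.e.\ set arises as the halting set of such a machine. Concretely, given an r.e.\ language $L$ (which, after fixing a standard computable bijection between the relevant finite alphabet and $\{0,1\}^*$ or $\natnum$, we may assume to be over $\{0,1\}$), there is a register machine $M'$ of the form treated in Proposition~\ref{prop:simregmachine} such that, when started with $R_1$ containing $w$ and $R_i$ containing $0$ for $i>1$, $M'$ halts iff $w\in L$. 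Applying Proposition~\ref{prop:simregmachine} to this $M'$ then yields an \aarm\ $M$ with $M$ accepting $w$ iff $M'$ halts on $w$, i.e.\ $L(M)=L$. Hence every r.e.\ language is recognised by some \aarm, and combining the two inclusions gives the theorem.

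The only point requiring a little care --- and the closest thing to an obstacle here --- is the matching of input conventions: Proposition~\ref{prop:simregmachine} speaks of a binary word $w$ placed in $R_1$, and one must ensure that ``$M'$ halts on $w$'' can be made to coincide with ``$w\in L$'' for the intended r.e.\ language, including the mild ambiguity between binary words and the natural numbers they represent, and the case where $L$ lives over a larger input alphabet. This is resolved by standard computable encodings and does not affect the argument; alternatively, one may observe that the construction underlying Proposition~\ref{prop:simregmachine} goes through verbatim with a larger register alphabet $\Gamma$, so no encoding step is strictly necessary.
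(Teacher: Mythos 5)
Your proposal is correct and matches the paper exactly: the paper derives Theorem~\ref{thm:baalre} precisely by combining Proposition~\ref{prop:baalre} (one inclusion) with Proposition~\ref{prop:simregmachine} together with the Turing-completeness of Shepherdson--Sturgis register machines (the other inclusion). Your extra remark on input-encoding conventions is a reasonable but routine detail that the paper leaves implicit.
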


\medskip
\noindent
We recall that an alternating Turing machine that decides in $\bigO(f(n))$
time can be simulated by a deterministic Turing machine using $\bigO(f(n))$ 
space.  The following theorem gives a similar connection between the
time complexity of \aarm s and the space complexity of deterministic
Turing machines.

\begin{theorem}\label{thm:baaltimedspace}
For any $f$ such that $f(n) \geq n$, $\baal[\bigO(f(n))] \subseteq \dspace[\bigO((n+f(n))f(n))]=\dspace[\bigO(f(n)^2)]$.
\end{theorem}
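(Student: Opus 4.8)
\medskip\noindent\textbf{Proof proposal.}
The plan is to adapt the textbook simulation of $\atime[\bigO(f(n))]$ by $\dspace[\bigO(f(n))]$ to the \aarm\ setting; the extra quadratic factor will appear because a single \aarm\ move is specified by a whole string rather than by $\bigO(1)$ bits. Let $M=(\Gamma,\Sigma,A,B)$ be an \aarm\ that decides a language $L$ in $g(n)$ steps with $g(n)\leq c\,f(n)$ for a constant $c$ and all large $n$, and fix a truncation depth $T:=c\,f(n)+1$. Since every instruction of $M$ is a \emph{bounded} automatic relation, there is a constant $c_0$ such that one move changes the register length by at most $c_0$; hence in any play of at most $T$ turns on an input of length $n$ the register content has length at most $n+c_0T=\bigO(n+f(n))$, which is $\bigO(f(n))$ because $f(n)\geq n$.

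First I would describe the deterministic simulator $N$ as a depth-first evaluation of the (finite, truncated) game tree. It computes a Boolean value $W(r,i)$ intended to mean ``with register content $r$ and $i$ turns already played, Anke can force a win using at most $T-i$ further moves''. The recursion is: $W(r,i)=\mathit{false}$ if $i\geq T$; for $i<T$ with $i$ even (Anke to move), $W(r,i)$ equals the disjunction of $W(x,i+1)$ over all instructions $J\in A$ and all $x$ with $(r,x)\in J$; and for $i<T$ with $i$ odd (Boris to move), $W(r,i)$ equals the conjunction of $W(x,i+1)$ over all $J\in B$ and all $x$ with $(r,x)\in J$. An empty disjunction (Anke is stuck) evaluates to $\mathit{false}$ and an empty conjunction (Boris is stuck) evaluates to $\mathit{true}$, which is exactly the winning condition of an \aarm; $N$ accepts $w$ iff $W(w,0)$ holds.

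Next I would verify correctness. If $w\in L$ then, by the definition of ``decides in $g(n)$ steps'', Anke has a winning strategy using at most $g(n)\leq c\,f(n)<T$ moves; every play consistent with it ends at a stuck-Boris position within $g(n)$ turns, hence bottoms out before the truncation at depth $T$, so $W(w,0)$ evaluates to $\mathit{true}$. If $w\notin L$ then Boris, not Anke, has such a winning strategy, so Anke has no winning strategy at all (a single play cannot be won by both players) and $W(w,0)$ evaluates to $\mathit{false}$. Hence $L(N)=L$.

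Finally I would bound the space. $N$ keeps the depth-first stack explicit; it has at most $T+1=\bigO(f(n))$ frames, and a frame at depth $i$ stores the current register content $r$ ($\bigO(n+f(n))$ symbols), a pointer to the instruction of $A$ or $B$ currently being processed ($\bigO(1)$, as there are finitely many), the candidate successor string $x$ currently being tried, which also serves as the cursor that enumerates the possibly exponentially large set $\{x:(r,x)\in J\}$ and has length at most $|r|+c_0=\bigO(n+f(n))$ by boundedness, the counter $i$ ($\bigO(\log f(n))$ bits), and a partial truth value ($\bigO(1)$). Deciding whether $(r,x)\in J$ is done by running the fixed finite automaton of $J$ on $conv(r,x)$ in $\bigO(1)$ additional space; Theorem \ref{the:first_order_definable} guarantees that such an automaton exists and can be computed from $M$. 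Summing over all frames, $N$ runs in space $\bigO(f(n))\cdot\bigO(n+f(n))=\bigO((n+f(n))f(n))$, and this is $\bigO(f(n)^2)$ because $f(n)\geq n$, which yields the claimed inclusion. The step I expect to be the main obstacle is the correctness argument around the truncation: one must genuinely use that $M$ \emph{decides} $L$ within $\bigO(f(n))$ steps (not merely recognises it), so that the winner's strategy is short, and one must choose the truncation depth strictly above that bound so that a genuine winning play always reaches its stuck-player leaf --- where the recursion reads off the correct winner --- before being cut off; a secondary subtlety is that the set $\{x:(r,x)\in J\}$ cannot be materialised and must be streamed one candidate at a time, the candidate string itself being the $\bigO(n+f(n))$-sized state that drives the per-frame cost and hence the overall $f(n)^2$ bound.
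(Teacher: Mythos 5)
Your proposal is correct and follows essentially the same route as the paper: bound the register length by $\bigO(n+f(n))$ using the boundedness constant, then simulate the $\bigO(f(n))$-deep alternating game by a depth-first evaluation whose stack holds $\bigO(f(n))$ frames of size $\bigO(n+f(n))$ each. The paper states this more tersely (citing a linear-time nondeterministic simulation of automatic relations convertible to linear space), while you spell out the truncation and the streaming of successor candidates, but the argument is the same.
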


\begin{proof}
Given an \aarm\ $M$, there is a constant $c$ such that each register update by an automatic relation used to define an instruction of $M$ increases the length of the register's contents by at most $c$.  Thus, after 
$\bigO(f(n))$ steps, the length of the register's contents is $\bigO(n+f(n))$.  
As implied by \cite[Theorem 2.4]{Case13}, each computation of an automatic relation with an input of length $\bigO(n+f(n))$ can be simulated by a nondeterministic Turing machine in $\bigO(n+f(n))$ steps; this machine can then be converted to a deterministic space $\bigO(n+f(n))$ Turing machine.  If $M$ accepts an input $w$, then there are $\bigO(f(n))$ register updates by automatic relations when Anke applies a winning strategy, and so there is a deterministic Turing machine simulating $M$'s computation with input $w$ using space $\bigO((n+f(n))f(n))$.    
\end{proof}

\noindent 
As a consequence, one obtains the following analogue of the equality
between $\ap$ (classes of languages that are decided by alternating
polynomial time) and $\pspace$.

\begin{corollary}\label{cor:aalpolyequalspspace}
$\bigcup_k \baal[\bigO(n^k)] = \pspace$.
\end{corollary}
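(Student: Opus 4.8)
The plan is to prove the two inclusions separately. The inclusion $\bigcup_k \baal[\bigO(n^k)] \subseteq \pspace$ is essentially immediate from Theorem~\ref{thm:baaltimedspace}: for each fixed $k \geq 1$, applying that theorem with $f(n) = n^k$ (which satisfies $f(n) \geq n$) gives $\baal[\bigO(n^k)] \subseteq \dspace[\bigO(n^{2k})] \subseteq \pspace$, and the case $k = 0$ is subsumed since $\baal[\bigO(1)] \subseteq \baal[\bigO(n)]$. A language in the union lies in $\baal[\bigO(n^k)]$ for some \emph{fixed} $k$, hence in $\dspace[\bigO(n^{2k})] \subseteq \pspace$, so the union is contained in $\pspace$.

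For the reverse inclusion $\pspace \subseteq \bigcup_k \baal[\bigO(n^k)]$, the key is the Chandra--Kozen--Stockmeyer characterisation $\pspace = \ap = \bigcup_k \atime[n^k]$. Fix $L \in \pspace$ and an alternating Turing machine $T$ deciding $L$ in time $\bigO(n^k)$. I would build an \aarm\ $M$ that simulates $T$ configuration by configuration, storing an encoding of the current $T$-configuration (tape contents, head position and state, with $\#$-separators as in the register-machine simulation of Proposition~\ref{prop:simregmachine}) in the register $R$. Advancing $T$ by one step is a local modification of this encoding inside a constant-size window around the head, so it is a bounded automatic relation (the register grows by at most one symbol per step), and by Theorem~\ref{the:first_order_definable} it is available as an instruction of $M$. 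Existential moves of $T$ are routed to Anke (instructions in $A$) and universal moves to Boris (instructions in $B$). Since an \aarm\ alternates strictly whereas $T$ need not, I would interleave forced ``pass'' instructions: whenever $T$ makes two consecutive moves of the same quantifier type, the opponent gets a single no-op instruction that merely toggles a bookkeeping marker while leaving the simulated configuration untouched, and at the very start a forced Anke no-op is inserted if $T$'s start state is universal. This at most doubles the number of rounds, so $M$ still runs in $\bigO(n^k)$ steps with register length $\bigO(n^k)$.

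The acceptance conditions then line up cleanly with the \aarm\ semantics. A halting configuration of $T$ is encoded so that it behaves like a universal configuration with no successors when $T$ accepts --- then no instruction of $B$ applies, so Boris cannot move and Anke, having moved last, wins --- and like an existential configuration with no successors when $T$ rejects, so that no instruction of $A$ applies and Anke cannot win. Using marker symbols I would ensure that a player on a pass turn can never reach a halting encoding and can only return control to the intended player. A straightforward induction on the remaining running time of $T$ then shows that Anke has a winning strategy with respect to $(M,w)$ of length $\bigO(|w|^k)$ iff $T$ accepts $w$; hence $L = L(M) \in \baal[\bigO(n^k)]$, completing the inclusion and the proof.

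The main obstacle I expect is not the step-by-step simulation itself --- that is routine given Theorems~\ref{the:first_order_definable} and~\ref{thm:baaltimedspace} --- but reconciling $T$'s unrestricted alternation with the \aarm's strict alternation: one must choose the marker symbols and the pass instructions so carefully that neither player can exploit a pass turn to escape a losing position, to desynchronise the turn parity from the quantifier type of the simulated configuration, or to trigger a halting encoding prematurely, and verifying this cleanly requires a short case analysis on whose turn it is when a halting configuration is reached.
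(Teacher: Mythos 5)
Your proposal is correct and follows essentially the same route as the paper: the forward inclusion via Theorem~\ref{thm:baaltimedspace} with $f(n)=n^k$, and the reverse inclusion via $\pspace=\ap$ together with a step-by-step simulation of an alternating Turing machine by an \aarm\ in which existential transitions become Anke's instructions and universal ones Boris's. The paper states this simulation in one sentence, whereas you additionally work out the strict-alternation and halting-configuration bookkeeping, which is a legitimate filling-in of detail rather than a different argument.
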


\begin{proof}
The containment relation $\baal[\bigO(f(n))] \subseteq \dspace[\bigO((n+f(n))f(n))]$ in Theorem \ref{thm:baaltimedspace} holds whether or not the condition $f(n) \geq n$ holds.
Furthermore, the computation of an alternating Turing machine can be simulated using an 
\aarm, where the transitions from existential (respectively, universal) states correspond to the instructions for Anke (respectively, Boris), and each computation step of the alternating 
Turing machine corresponds to a move by either player.  Therefore 
\pspace, which is equal to \ap, 
is contained in $\bigcup_k \baal[n^k]$.
\end{proof}

\noindent
We come next to a somewhat surprising result: 
an \aarm-program 
can recognise $\sat$ 
using just $\log^*n + \bigO(1)$ steps.  To prove the theorem, we give the 
following lemma, which 
illustrates most of the power of $\aarm$s.

\begin{lemma}[Log-Star Lemma]\label{lem:logstar1}
		Let $u,v \in \Sigma^*$. Let $\#,\$ \not\in \Sigma$. 
Then both languages $\{u'\$v' : 
u' \in \#^*u\#^*, v' \in \#^*v\#^*$ and
$u = v\}$ and $\{u'\$v' : 
u' \in \#^*u\#^*, v' \in \#^*v\#^*$ and
$u \neq v\}$ are in $\baal[\log^* n + \bigO(1)]$.
	\end{lemma}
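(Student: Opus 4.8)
The plan is to reduce an equality (or disequality) test on strings of length $m \le n$ to a constant number of \aarm\ steps that either resolve it outright via a single automatic relation, or shrink it to an equality/disequality test on strings of length $\bigO(\log m)$; iterating this ``zoom'' step $\log^* m = \log^* n + \bigO(1)$ times finishes the argument. The $\#$-padding is exploited as scratch space: overwriting a maximal block of $\#$'s by an arbitrary string of the same length over a fixed alphabet is a first-order definable, hence (by Theorem~\ref{the:first_order_definable}) automatic, operation, so in one step a player can record auxiliary data — interval endpoints, a partition pattern, a positional marker — into the padding without changing the register length. This matters because \aarm\ updates are length-bounded, so indices and markers cannot simply be appended.

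First I would dispatch the easy cases with a single automatic step. The relation that, reading the register, checks whether the (possibly padded) blocks $u$ and $v$ have equal length, and if so whether they literally agree, is regular; hence an automatic relation can certify ``$u = v$'' or ``$u \ne v$'' whenever a bounded amount of information decides it, in particular once the surviving common length is $\bigO(1)$. Reaching this halting configuration is what lets Anke win (for the equality language) or lets Boris run out of moves.

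The core is the zoom primitive: one Anke move, one Boris move, and $\bigO(1)$ bookkeeping steps, carrying a register that encodes a claim ``$u[I]=v[I]$'' for an interval $I$ of length $m$ whose endpoints sit in scratch. In the equality version, Anke partitions $I$ into consecutive blocks of length $\Theta(\log m)$ and writes the block-boundary pattern into scratch; an automatic relation checks the pattern is well-formed relative to $I$ and refuses otherwise, forcing honest play. Boris then challenges one block by overwriting scratch with a marker of the form $\#^\ast\blacksquare\#^\ast$ whose $\blacksquare$ lies on a recorded boundary — again validated by a regular check, so Boris cannot cheat — after which an automatic relation cuts $u$ and $v$ between the marked boundary and the next one, yielding ``$u[I_j]=v[I_j]$'' with $|I_j|=\Theta(\log m)$. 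The disequality version is symmetric, with Anke issuing the block challenge to the block where she witnesses a mismatch; in both versions optimal play of \emph{either} side forces $m \mapsto \Theta(\log m)$ each macro-round, so after $\log^* n + \bigO(1)$ rounds the single resolving step applies, the game terminates within the bound, and Anke has a winning strategy precisely when $u=v$ (resp.\ $u\ne v$).

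The main obstacle is the mismatch between ``binary-addressable'' block indices and what automatic relations can do: an automatic relation cannot jump to the $j$-th block when $j$ is written in binary, since that requires a multiplication. This is exactly why the challenge is conveyed by a unary-style positional marker in the padding rather than by a binary index, and why both the partition pattern and the marker must be re-validated by a regular check each round — making these checks genuinely regular (and handling the degenerate case of short padding, where one first argues the surviving interval already has length $\bigO(\log n)$ so that bounded scratch suffices) is the delicate part. A secondary point is to make each zoom cost an \emph{absolute} constant number of \aarm\ steps, independent of the round, so the bound is $\log^* n + \bigO(1)$ and not merely $\bigO(\log^* n)$; this forces the partition, challenge, validation and cut to be compressed into a fixed list of instructions.
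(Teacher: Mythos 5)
Your overall recursion ($m \mapsto \Theta(\log m)$ per macro-round, $\log^* n + \bigO(1)$ rounds, base case resolved by a single regular check) matches the paper's, but the mechanism you propose for communicating and validating a block challenge has a genuine gap. The input is a single string $u'\$v'$, so the block cut out of $u$ and the block cut out of $v$ live in different parts of the register; for soundness of the equality game you must guarantee that the challenge selects \emph{corresponding} blocks (the $j$-th block of $u$ and the $j$-th block of $v$ for the same $j$), otherwise Boris can always exhibit a ``mismatch'' even when $u = v$. A boundary pattern plus a unary positional marker $\#^*\blacksquare\#^*$ cannot enforce this: checking that two marked positions sit at the same index within their respective halves (equivalently, that the two prefixes cut off have equal length) is not a regular property of the register contents, and it is itself an equality problem on strings of length up to $m$, so deferring it to the recursion does not shrink the instance and destroys the $\log^*$ bound. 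The same objection applies to your well-formedness check: ``all blocks have length $\Theta(\log m)$'' for a non-constant $\log m$ is not a regular condition either.

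The paper closes exactly this hole with a self-indexing marking: each block of $u$ (and of $v$) is labelled, in the marks themselves, by its own index written as a short binary string, the labels running through $2\{0,1\}^*$ in length-lexicographic order (so block lengths vary, roughly $2,2,3,3,3,3,4,\ldots$, with maximal block length at most $\log m$). Well-formedness then becomes a \emph{local} condition --- each block's label is the length-lexicographic successor of its predecessor's --- and both this condition and ``these two blocks correspond'' (namely, they carry equal labels) reduce to comparing explicit substrings of length $\bigO(\log m)$, which is where the recursion legitimately re-enters; the length test $|u|=|v|$ is likewise reduced to comparing the labels of the last blocks. You would need to replace your equal-length partition and unary marker with some such locally verifiable per-block addressing scheme for the argument to go through.
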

	
	\noindent
	The Log-Star Lemma essentially states that a comparison of two substrings can be done by
an \aarm\ in $\log^*n + \bigO(1)$ time. This is done by ignoring the unnecessary symbols in the 
register by replacing them with $\#$'s and adding a separator ($\$$) between the two strings.

\begin{proof}
We now prove the Log-Star Lemma.
The algorithm below recursively reduces the problem
to smaller sizes of $u$, $v$ in constant number of steps
(the maximum of the length of $u,v$ is reduced logarithmically in
constant number of steps). 
For the base case, if size of $u$ or $v$ is bounded by a constant,
then clearly both languages can be recognized in one step.

For larger size $u, v$, the algorithm/protocol works as follows.
For ease of explanation, suppose Anke is trying to show that
$u=v$ (case of Anke showing $u \neq v$ will be similar).
Given input $s = u'\$v'$, player Anke will try to give each symbol except \$'s a mark $\in \{0,1,2,3\}$ as follows: 
		
		\begin{enumerate}
			\item For each $\#$ in $u'$ and $v'$, the mark of $3$ will be given.
			\item For the contiguous symbols of $u$, starting from the first symbol, the following infinite marking will be given (whitespaces are for the ease of readability and not part of marking):
$$
20~21~200~201~210~211~2000~2001~2010~2011~2100~2101~2110~2111~20000~\cdots
$$
\noindent
Namely, a series of blocks of string in ascending length-lexicographical order.  Let $T$ be the so defined infinite sequence.
\noindent
Given a string 
$s = u'\$v'$, where $u' \in \#^*u\#^*$ and $v' \in \#^*v\#^*$
for some $u,v \in \Sigma^*$, each contiguous subsequence of $u$ (resp.~$v$) whose sequence of positions is equal to the sequence of positions of $T$ of some string in $2\{0,1\}^*$ such
that the next symbol in $T$ is $2$ will be called a {\em block}.  Each block starts with $2$ followed by a binary string.  Let $k$ ($\geq 2$) be the maximum size of a block.  Summing the lengths of the blocks of $u$ gives that $(k-2)\cdot 2^{k-1} + k \leq n$, and thus $k \leq \log n$. 
			\item For the contiguous symbols of $v$, the marking will be similar.
		\end{enumerate}
		\noindent
		The marking is considered \textit{valid} if all above rules are satisfied. This is 
an example of a \textit{valid} marking of $S = 
		``\#foobar\#\#\$foobar\#\#"$:
		
		\begin{center}
			\begin{tabular}{ c || c c c c c c c c c c c c c c c c c c }
				s & \# & f & o & o & b & a & r & \# & \# & \$ & f & o & o & b & a & r & \# & \#\\ 
				Mark & 3 & 2 & 0 & 2 & 1 & 2 & 0 & 3 & 3 & \$ & 2 & 0 & 2 & 1 & 2 & 0 & 3 & 3   
			\end{tabular}
		\end{center}
		
		\noindent
		If $u = v$ and the marking is \textit{valid}, player Anke will guarantee that each 
symbol of $u$ and $v$ will be marked 
with exactly the same marking. However if $u \neq v$ and the marking is \textit{valid}, either the length of $u$ and $v$ are different or there will be at least a single block which differs on at least one symbol between $u$ and $v$. Therefore, player Boris can have the following choices of challenges:
		
		\begin{enumerate}
			\item Challenge that player Anke did not make a valid marking, or
			\item $|u| \neq |v|$, or
			\item the string in a specified block differs on at least one symbol between $u$ and $v$.
		\end{enumerate}
		
		\noindent
		Notice that $u = v$ if and only if player Boris could not 
successfully challenge player Anke. The first challenge will ensure that player Anke gave a valid marking. There are three possible cases of invalid marking:
		
		\begin{enumerate}[label=(\alph*)]
			\item There is a $\#$ in $u'$ or $v'$ which is not given by a mark of $3$. In this case, player Boris may point out its exact position. Here, player Boris needs $1$ step.
			\item For $u$ and $v$, the first block is not marked with "20". This can also be easily pointed out by player Boris. Here, player Boris needs $1$ step.
			\item For $u$ and $v$, a block is not followed by 
its successor. This can be pointed out by player Boris by checking two 
things: the length of the `successor' block should be less than or equal 
to the length of the `predecessor' block plus one, and the 
`successor' block indeed should be the successor of the `predecessor' block. 
Also, 
we note that the last block may be incomplete.
			\begin{enumerate}[label=(\roman*)]
			\item The length case can be checked by looking at how many symbols there are between the pair of $2$'s bordering each block. Let $p$ and $q$ be the length of `predecessor'
block and the `successor' block respectively. In 
the case that the 
`successor' block is not the last block (not incomplete), player Boris may 
challenge if $p \neq q$ and $p+1 \neq q$. This can be done by marking both blocks with $1$ separated by \$ and the rest
with dummy symbols $\#$ and then doing the protocol for equality of
the modified $u$ and $v$ recursively.
As player Boris may try to find the `short' challenge, 
player Boris will find the earliest block which has the issue 
and thus make sure that $p$ is at most 
logarithmic in the maximum of the lengths of $u$ and $v$. 
As $q$ may be much larger than $p$, player Boris may limit the second 
block by taking at most $p+2$ symbols. 
				\item The successor case can be checked by the following observation. A successor of a binary string can be calculated by finding the last $0$ symbol and flipping all digits from that position to the end while maintaining the previous digits. As an illustration, the successor of "10110\underline{0}111" will be "10110\underline{1}000" where the symbols are separated in $3$ parts: the prefix which is the same, the last $0$ digit, which is underlined, becoming $1$; and all $1$ digits on suffix becoming $0$. 
Player Boris then may challenge the first part not to be equal or the last part not to be the same length or not all $1$'s by providing the 
position of the last $0$ on the `predecessor' block (or the last $1$ on the 
`successor' block, if any).
Checking the equality of two strings can be done recursively, also 
similarly applied for checking the length. Notice a corner case of 
all $1$'s which has the successor consisting of $1$ followed by $0$'s with 
the same length, which can be handled separately. Also 
notice that the `successor' block may be incomplete if it is the last block,
which can also be handled in a similar manner as above. 
			\end{enumerate}
		\end{enumerate}
		
		\noindent
		For the second challenge, player Boris can 
(assuming the marking is valid) check whether the last two blocks of 
$u$ and $v$ are equal. Again, player Boris may limit it for a `short' challenge so the checking size is decreasing to its $\log$. 
For the third challenge, player Boris will specify the two blocks 
on $u$ and $v$ (same block on both) which differ on at least one symbol 
between $u$ and $v$. Again, same protocol will apply and the size 
is decreasing to its $\log$.  Furthermore, both the marking and selection
of blocks are done in a single turn. 

Thus, the above algorithm using one alternation of each of the players
reduces the problem to logarithmic in the size of the maximum of the
lengths of $u$ and $v$.
In particular, when the size of $u$ and $v$ are small enough,
the checking will be done in constant number of steps. 
Thus, the complexity of the problem satisfies:
$T(m_{k+1}) \leq T(\log m_k)$, where $m_i$ denotes the maximum of the sizes
of $u$ and $v$ at step $i$. 
As the lengths of $u,v$ at each step are bounded by the length of the whole
input string, the lemma follows. 
Note that either player can enforce that the algorithm runs in $\log^*n+\bigO(1)$ steps. The player makes the own markings always correct and challenges incorrect markings of the opponent at the first error so that the logarithmic size descent is guaranteed. Challenged correct markings always cause the size to go down once in a logarithmic scale. 
\end{proof}

\begin{theorem} \label{the:sat_baal}
		There is an $\np$-complete problem in $\baal[\log^*n + \bigO(1)]$.
	\end{theorem}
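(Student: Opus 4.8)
The plan is to exhibit a suitably encoded variant of $\sat$ that lies in $\baal[\log^*n+\bigO(1)]$, using Lemma \ref{lem:logstar1} for the one consistency check that cannot be verified locally. First I would fix the encoding. A $3$CNF formula $\phi$ over variables $x_1,\dots,x_m$ is written as a list of clauses, each clause a list of at most three literal occurrences, where every occurrence of $\pm x_i$ is encoded as a sign symbol, a fixed-length binary representation of $i$, and a single blank symbol $\square$ reserved for the value an assignment gives to $x_i$; all delimiters are fresh symbols and in particular $\#,\$\notin\Sigma$. Call the resulting language $\sat'$. Since this encoding is polynomially related to the standard encoding of $\sat$, the language $\sat'$ is $\np$-complete, so it suffices to build an \aarm\ recognising $\sat'$ in $\log^*n+\bigO(1)$ steps.

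Second, I would describe the game played by the \aarm. On her first move Anke replaces every $\square$ by a bit: the relation that leaves the formula part fixed and rewrites each $\square$ by an arbitrary symbol of $\{0,1\}$ is length-preserving and regular on convolutions, hence a bounded automatic relation, so this is one legal step; it is undefined unless the register already holds a syntactically valid encoding, which costs Anke the game when $w\notin\sat'$. When $\phi$ is satisfiable, Anke fills the blanks according to a fixed satisfying assignment \emph{consistently}, i.e.\ all occurrences of the same index receive the same bit. Boris then commits, in one step (marking the relevant part of the register), to exactly one of two challenges: (A) he marks a clause he claims is falsified; or (B) he marks two literal occurrences, claiming they carry the same index but were filled with different bits. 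In case (A), Anke marks in one step a literal of that clause whose sign and filled bit agree; a $B$-instruction verifying this local agreement and writing a success marker $\bigstar$ then leaves Boris with no legal move (and if no such literal exists, Anke has no legal move). Case (A) costs $\bigO(1)$ rounds. In case (B), Anke first decides whether to refute Boris by showing the two filled values actually agree (a local check, $\bigO(1)$ rounds) or that the two indices actually differ; in the latter case she rewrites the register in one step to the form $u'\$v'$ with $u'\in\#^*u\#^*$, $v'\in\#^*v\#^*$, where $u,v$ are the two index strings and everything else has been overwritten by $\#$ (again a bounded automatic relation, since the two marked occurrences locate $u$ and $v$). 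From here the protocol from the proof of Lemma \ref{lem:logstar1} for $\{u'\$v':u\neq v\}$, with Anke as the distinguished player, finishes in $\log^*n+\bigO(1)$ rounds; finitely many phase symbols kept in the register (recall that strict alternation is not essential for \aarm s and a constant amount of state can be stored in the register) glue the phases together and stop either player from deviating. Finally one checks the converse directions: if $w$ is not a valid encoding, or encodes an unsatisfiable formula, then every way Anke fills the blanks is either inconsistent (Boris wins via (B)) or a genuine assignment falsifying some clause (Boris wins via (A)), and in each branch Boris can force termination within $\log^*n+\bigO(1)$ steps; whereas if $\phi$ is satisfiable, Anke's consistent satisfying filling defeats every challenge.

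The routine parts are verifying that each register transformation (filling blanks, marking, the final reshaping into $u'\$v'$) is a bounded automatic relation, and the bookkeeping that there are $\bigO(1)$ rounds before the single invocation of the Log-Star protocol and $\log^*n+\bigO(1)$ inside it, giving $\log^*n+\bigO(1)$ overall. The main obstacle, and the reason Lemma \ref{lem:logstar1} is doing the real work, is the consistency check: no automatic relation can read off ``the value assigned to $x_i$'' by random access, so the consistency of Anke's filling across the $\bigO(n)$ occurrences of a single variable cannot be tested in one step; it must instead be reduced, through alternation, to the one substring comparison that Boris is most eager to make, which is precisely what Lemma \ref{lem:logstar1} resolves in $\log^*n+\bigO(1)$ steps. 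Care is also needed to make the game ``one-shot'', so that Boris must commit to his best challenge and cannot restart, and to align whose turn it is when the Log-Star sub-protocol begins; both are handled by the finitely many extra state symbols carried in the register.
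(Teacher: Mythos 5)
Your proposal is correct and follows essentially the same route as the paper: Anke nondeterministically fills in a truth value at each variable occurrence, clause satisfaction is checked locally (the paper folds this into a dfa/instruction rather than a separate Boris challenge, a cosmetic difference), Boris challenges by pointing at two occurrences allegedly of the same variable with different values, and Anke refutes him via the inequality case of Lemma \ref{lem:logstar1} in $\log^*n+\bigO(1)$ steps. The extra bookkeeping you flag (turn parity, phase markers in the register) is likewise left implicit in the paper's proof.
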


\begin{proof}
		Consider any encoding of a $\gsat$ formula in conjunctive normal form such that after each variable occurrence there is a space for a symbol indicating the truth value of that variable.  For example, literals may be represented as $+$ or $-$ followed by a variable name and then a space for the variable's truth value, clauses may be separated by semicolons, literals may be separated by commas and a dot represents the end of the formula.  Anke sets a truth value for each variable occurrence in the formula and a dfa then checks whether or not between any two semicolons, before the first semicolon and after the last semicolon there is a true literal; if so, Boris can challenge that two identical variables received different truth values.      
It is now player Anke's job to prove that the two variables picked by Boris are different. By the Log-Star Lemma, this verification needs $\log^*n + \bigO(1)$ steps. Hence, $\cnfsat \in \baal[\log^*n + \bigO(1)]$.
	\end{proof}

\noindent
The Log-Star Lemma can also be applied, using a technique similar to that
in the proof of Theorem \ref{the:sat_baal}, to show that for any $k \geq 3$, 
the $\np$-complete problem \textit{$k$-COLOUR} of deciding whether any given
graph $G$ is colourable with $k$ colours belongs to $\baal[\log^*n + \bigO(1)]$.
Using a suitable encoding of nodes, edges and colours as strings, Anke 
first nondeterministically assigns any one of $k$ colours to each node
and ensures that no two adjacent nodes are assigned the same colour;
Boris then challenges Anke on whether there are two substrings of the
current input that encode the same node but encode different colours.

\medskip
\noindent
The next theorem shows that the class $\baal[\log^*n + \bigO(1)]$ contains  \\ \nll.

\begin{theorem}\label{thm:nlsinaal}
$\nll \subseteq \baal[\log^*n + \bigO(1)]$.
\end{theorem}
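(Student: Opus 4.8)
The plan is to reduce the statement to simulating a nondeterministic logarithmic-space machine $M$ and then to verify an accepting computation by an alternating divide-and-conquer whose recursion depth imitates that of the Log-Star Lemma rather than that of Savitch's theorem. First I would normalise $M$: assume it has a read-only input tape, an $O(\log n)$-size work tape, halts within $n^{c}$ steps on every branch for a fixed constant $c$, and has a single accepting configuration (e.g.\ with all heads returned to the left end). Then $w\in L(M)$ iff the initial configuration reaches the accepting one in the configuration graph $G_w$, which has $n^{O(1)}$ nodes; the decisive structural fact is that whether one configuration is a successor of another depends only on the two configurations together with the single input symbol currently scanned. Because an \aarm\ may lengthen its register by only a constant per step, it cannot write the $n^{O(1)}$-long computation, nor allocate $\Theta(\log n)$ cells of scratch; so I would keep configurations encoded \emph{in place} inside $w$ over an enriched but finite register alphabet — the head positions as marks, the $O(\log n)$ work-tape bits as extra bits attached to the first $O(\log n)$ symbols of $w$, the state in the constant extra-state. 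With this representation a single bounded automatic relation simulates one step of $M$ (the symbol under the input head being obtained by a guess-and-verify left-to-right scan), and equality/inequality of two such in-place configurations reduces, after massaging into the $\#^{*}u\#^{*}\$\#^{*}v\#^{*}$ shape, to an instance of the Log-Star Lemma.

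Next I would organise the alternating verification. The naive ``Anke names a midpoint, Boris picks a half'' recursion has depth $\log|G_w|=\Theta(\log n)$, far too large, so the essential point is to contract the reachability parameter $d$ super-exponentially per round, in the spirit of the Log-Star Lemma's $T(m)\le T(\log m)$ recursion: to certify ``$C$ reaches $C'$ within $d$ steps'' Anke commits — into the room hidden inside the enriched register — to as many intermediate waypoints as fit, claiming consecutive ones are within $d/(\#\text{waypoints})$ steps, and Boris marks one consecutive pair to pursue; after $\log^{*}n+O(1)$ such rounds $d$ is a constant and the surviving claim is a single transition of $M$, checked directly. All the bookkeeping each round needs — that Anke did not tamper with the two endpoints when recursing, that Boris's marked pair is the one being pursued, and the final symbol-under-the-head lookups — consists of comparisons of short in-place substrings; letting Boris universally choose which single comparison to audit folds these into one Log-Star-style verification, so they contribute an additive $\log^{*}n+O(1)$ rather than multiplying the round count. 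Since $\baal[\cdot]$ concerns \emph{deciding} (both players can force termination) and the protocol is symmetric between the two sides, the complementary ``not reachable'' case is handled the same way — or one may appeal to the Immerman--Szelepcs\'enyi theorem — so the construction decides, not merely recognises, $L(M)$ within $\log^{*}n+O(1)$ steps.

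The step I expect to be the main obstacle is reconciling two conflicting constraints: the \aarm\ can barely touch memory outside the input, yet reachability seems to demand either a polynomially long certificate or $\Omega(\log n)$ levels of alternation. Everything therefore hinges on (i) choosing the in-place encoding so that the configurations Anke commits to at a given round, plus every mark Boris places to point at a substring, still live in a finite alphabet over a length-$(n+O(\log^{*}n))$ register, and (ii) laying out the waypoints and marking conventions so that every substring comparison the Log-Star Lemma is invoked on is already between (near-)adjacent substrings — a bounded automatic relation cannot relocate data by more than a constant distance, which is exactly why Anke cannot simply ``extract'' a selected waypoint pair to a canonical place before recursing. Getting the layout, the marking discipline, and the accounting to cooperate so that the whole interaction stays within $\log^{*}n+O(1)$ rounds is where the real work lies.
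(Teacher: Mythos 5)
Your approach is essentially the paper's: guess the $n^{c}$-step computation, refine the disputed interval of steps round by round by having Anke commit to intermediate configurations stored on extra tracks of the register, let Boris either descend into one consecutive pair or lodge a single consistency challenge, and defer every string comparison (copied counters, matching endpoints, the symbol under the input head) to one invocation of the Log-Star Lemma at the very end. The paper realises this by cutting each interval into $\sqrt{n}$ equal blocks, each block carrying its number, its step counts and its start/end configurations, so that $2c+\bigO(1)$ alternations shrink the interval to a single transition; your ``waypoints'' are the same device.

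The one point you should repair is the round accounting for the refinement phase. You model the contraction of the reachability parameter $d$ on the Log-Star Lemma's $T(m)\le T(\log m)$ recursion and allot it $\log^{*}n+\bigO(1)$ rounds; since Boris's deferred audit costs a further $\log^{*}n+\bigO(1)$ rounds \emph{after} whichever round he chooses to challenge, that accounting only yields $2\log^{*}n+\bigO(1)$, i.e.\ membership in $\baal[\bigO(\log^{*}n)]$ rather than $\baal[\log^{*}n+\bigO(1)]$. The fix is already implicit in your own construction: ``as many waypoints as fit'' is polynomially many (about $n/\log n$ configurations of $\bigO(\log n)$ bits each), so each round divides $d$ by a polynomial factor and the refinement terminates after $\bigO(c)=\bigO(1)$ alternations — there is no super-exponential $d\mapsto\log d$ recursion here, and none is needed. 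With that, the sole $\log^{*}n$ term comes from the final audit and the stated bound follows. Two smaller remarks: store each waypoint's input-head position as a binary counter inside its own segment rather than as a mark on the input (marks collide when many waypoints share a head position, and the finite alphabet cannot absorb that); and the appeal to Immerman--Szelepcs\'enyi is unnecessary, since when $w\notin L(M)$ every computation Anke exhibits contains a flaw and Boris wins by challenging the first one within the same time bound.
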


\begin{proof}
Consider an \nll\ computation that takes time $n^c$.
One splits the input into $\sqrt{n}$ equal-sized blocks, each of which
represents the sequence of configurations of the Turing Machine during
an interval of $\frac{s}{\sqrt{n}}$ steps, where $s$ is the total number of
steps on the input, and
Anke guesses for each block the following information: 
\begin{itemize}
\item The overall number of steps needed, $s$; 
\item The block number;
\item The rounded number of steps done in this block (approximately $\frac{s}{\sqrt{n}}$ steps);
\item The total number of steps done until this block;
\item The starting configuration at this block;
\item The ending configuration at this block.
\end{itemize}
Furthermore, the number of variables needed is $2c$ plus a constant.

Boris can now challenge that some configuration is too long or that
the number of digits is wrong or that the information at the end of one
block does not coincide with the information at the start of another block
or that initial and final configurations are not starting and ending
configurations or select a block whose computation has to be checked
in the next round, again by distributing the steps covered in this
interval evenly onto $\sqrt n$ blocks in the next variable.

By $\bigO(c)$ iterations, the distance of steps between two neighbouring configurations becomes 1. Now Boris can select two pieces of information copied to check whether they are right or whether the \ls\ computation in the last step read the symbol correctly out of the input word and so on. These checks can all be done in $\log^*n + \bigO(1)$ steps.
\end{proof}

As yet, we have no characterisation of those problems in \np\ which are in 
$\baal[\bigO(\log^* n)]$ and we think that for each such problem it might depend heavily on 
the way the problem is formatted. The reason is that 
it may be difficult to even prove whether or not
$\p$ is contained in $\baal[\bigO(\log^*n)]$, due to the following proposition.
We will later show that the class $\paal[\log^* n + \bigO(1)]$ which is obtained from $\baal[\log^* n + \bigO(1)]$ by starting with one additional step which generates a variable of polynomially sized length coincides with \ph.

\begin{proposition}\label{prop:psubsetpspace}
If $\p \subseteq \baal[\bigO(\log^* n)]$, then $\p \subset \pspace$.
\end{proposition}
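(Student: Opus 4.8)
The plan is to funnel the hypothesis through Theorem~\ref{thm:baaltimedspace} so as to trap $\p$ inside a deterministic space class that is provably smaller than $\pspace$, and then quote the deterministic space hierarchy theorem. Assume $\p \subseteq \baal[\bigO(\log^* n)]$. By Theorem~\ref{thm:baaltimedspace}, instantiated with $f(n) = \log^* n$ and using (exactly as in the proof of Corollary~\ref{cor:aalpolyequalspspace}) the fact that the inclusion $\baal[\bigO(f(n))] \subseteq \dspace[\bigO((n+f(n))f(n))]$ needs no assumption $f(n) \geq n$, one obtains $\baal[\bigO(\log^* n)] \subseteq \dspace[\bigO((n+\log^* n)\log^* n)] = \dspace[\bigO(n\log^* n)]$. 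Hence $\p \subseteq \dspace[\bigO(n\log^* n)]$.

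The next step is to observe $\dspace[\bigO(n\log^* n)] \subsetneq \pspace$. Indeed $n\log^* n = o(n^2)$ and $n^2$ is space-constructible, so the deterministic space hierarchy theorem yields $\dspace[\bigO(n\log^* n)] \subseteq \dspace[o(n^2)] \subsetneq \dspace[n^2] \subseteq \pspace$. Combining the two inclusions, $\p \subseteq \dspace[\bigO(n\log^* n)] \subsetneq \pspace$; since a set contained in a proper subset of $\pspace$ is itself a proper subset of $\pspace$, we conclude $\p \subsetneq \pspace$, which is the assertion $\p \subset \pspace$ of the proposition.

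The argument is short, and the only places needing a little care are bookkeeping ones: (i) checking that Theorem~\ref{thm:baaltimedspace} may be applied with a sublinear time bound, which the remark in the proof of Corollary~\ref{cor:aalpolyequalspspace} already licenses; (ii) keeping track of the hidden $\bigO$ inside the $\baal$ time bound so that the resulting space bound is genuinely $\bigO(n\log^* n)$ rather than something larger; and (iii) the elementary asymptotics $n\log^* n = o(n^2)$ that is needed to invoke the space hierarchy theorem with gap function $n^2$. None of these is a real obstacle; the conceptual point — and the reason the proposition is plausible — is that the boundedness of the automatic relations forces the register contents to grow only linearly, so $\bigO(\log^* n)$ \aarm\ rounds cost only near-linear deterministic space, which the space hierarchy theorem separates from $\pspace$.
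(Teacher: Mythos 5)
Your proof is correct and follows essentially the same route as the paper's: apply Theorem~\ref{thm:baaltimedspace} with $f(n)=\log^* n$ (noting the hypothesis $f(n)\geq n$ is not needed for the first containment), then separate the resulting near-linear deterministic space class from $\dspace[\bigO(n^2)]\subseteq\pspace$ via the space hierarchy theorem. The only cosmetic difference is that you simplify $(n+\log^* n)\log^* n$ to $n\log^* n$ before invoking the hierarchy theorem, which changes nothing.
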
 

\begin{proof}
By Theorem \ref{thm:baaltimedspace} (the condition $f(n) \geq n$ is not 
necessary for the first containment relation to hold), 
$\baal[\bigO(\log^*n)] \subseteq \dspace[\bigO((n+\log^*(n))\cdot \log^*(n))]$.
Moreover, by the space hierarchy theorem \cite[Corollary 9.4]{Sipser13}, one has 
$\dspace[\bigO((n+\log^*(n))\cdot \log^*(n))] \subset \dspace[\bigO(n^2)]$.
Thus if $\p \subseteq \linebreak[4]\baal[\bigO(\log^*n)]$, then 
$\p \subset \dspace[\bigO(n^2)] \subset \pspace$.
\end{proof}

\section{Polynomial-Size Padded Alternating Automatic Register Machine} 

An \aarm\ is constrained by the use of {\em bounded} automatic relations during each computation step.  As we will prove later in Theorem \ref{thm:ubaalelem}, this is a real limitation, for the use of unbounded automatic relations allows an \aarm\ to recognise the class of elementary recursive functions. 
In this section, we study the effect of 
allowing a polynomial-size padding to the input of an Alternating Automatic Register Machine on its time complexity; this new model of computation will be called a Polynomial-Size Padded Bounded Alternating Automatic Register Machine (\paarm).  The additional feature of a polynomial-size padding will sometimes be referred to informally as a ``booster'' step of the \paarm.  Intuitively, padding the input before the start of a computation allows a larger amount of information to be packed into the register's contents during a computation history.  We show two contrasting results: on the one hand, even a booster step does not allow an \paarm\ with time complexity $\bigO(1)$ to recognise non-regular languages; on the other hand, the class of languages recognised by \paarm s in time $\log^*n+\bigO(1)$ coincides with the polynomial hierarchy.   

Formally, a {\em Polynomial-Size Padded Bounded Alternating Automatic Register Machine} (\paarm) $M$ is represented as a quintuple $(\Gamma,\Sigma,A,B,p)$, where $\Gamma$ is the register alphabet, $\Sigma$ the input alphabet, $A$ and $B$ are two finite sets of instructions and $p$ is a polynomial.  As with an \aarm, the register $R$ initially contains an input string over $\Sigma$, and $R$'s contents may be changed in response to an instruction, $J \subseteq \Gamma^* 
\times \Gamma^*$ which is a bounded automatic relation.  
A computation history of a \paarm\ with input $w$ for any 
$w \in \Sigma^*$ is defined in the same way that was done for an \aarm, 
except that the initial configuration is       
$(\ell,wv,x)$ for some $I_{\ell} \in A$, some $x,v \in \Gamma^*$, 
$(wv,x) \in I_{\ell}$, where $v = @^k$
for a special symbol $@ \in \Gamma - \Sigma$ and $k \geq p(|w|)$.
Think of $@^k$ as padding of the input.
Anke's and Boris' strategies, denoted by $\cA$ and 
$\cB$ respectively, are defined as before.  For any $u \in \Gamma^*$, a winning strategy for Anke with respect to $(M,u)$ is also defined as before.  Given any $w \in \Sigma^*$, $M$ {\em accepts} $w$ if for every $v \in @^*$,
with $|v| \geq p(|w|)$, 
Anke has a winning strategy with respect to $(M,wv)$.
Similarly, $M$ {\em rejects} $w$ if for every $v \in @^*$,
with $|v| \geq p(|w|)$, 
Boris has a winning strategy with respect to $(M,wv)$.
Note that the winning strategies need to be there for 
every long enough padding. If Anke and Boris do not satisfy the above
properties, then $(A,B)$ is not a valid pair. 

\begin{definition}[Polynomial-Size Padded Bounded Alternating Automatic Register Machine Complexity]
\label{defn:paal}

Let $M = (\Gamma,\Sigma,A,B,p)$ be a \paarm\ and let $t \in \natnum_0$.  
For each $w \in \Sigma^*$, $M$ {\em accepts $w$ in time $t$} if for every $v = @^k$, where 
$k \geq p(|w|)$ and $@ \in \Gamma - \Sigma$, Anke has a winning strategy $\mathcal{A}$ with respect to $(M,wv)$ and for any strategy $\mathcal{B}$ played by Boris, the length of $\cH(\cA,\cB,M,wv)$ is not more than $t$.  

A \paarm\ decides a language $L$ in $f(n)$ steps for a function $f$ depending on the length $n$ of the input if for all $x \in \{0,1\}^n$, both players can enforce that the game terminates within $f(n)$ steps by playing optimally and one player has a winning strategy needing at most $f(n)$ moves and $x \in L$ if Anke is the player with the winning strategy. 
$\paal[f(n)]$ denotes the family of languages decided by \paarm s that decide in time $f(n)$.  


	\end{definition}
	
	\begin{remark*}
		Note that a $\paarm$-program can trivially simulate an $\aarm$-progr\-am by ignoring the generated padding; thus $\baal[\bigO(f(n))] \subseteq \paal[\bigO(\linebreak[4]f(n))]$. On the other hand, to simulate a booster step, an $\aarm$-program needs $\bigO(p(n))$ steps as each bounded automatic relation step can only increase the length by a constant.
	\end{remark*}

\noindent
As with $\baal[\bigO(f(n))]$, the class $\paal[\bigO(f(n))]$ is closed under the usual set-theoretic Boolean
operations as well as regular operations. 

\begin{theorem}\label{thm:closurepaal}
		Let $\star$ denote any one of the operations union, intersection and concatenation.
		For each $L_1 \in \paal[f(n)]$ and each $L_2 \in \paal[g(n)]$, $L_1 \star L_2 \in \paal[\max\{f(n),g(n)\}]$
		and $L_1^* \in \paal[f(n)]$.
		In particular, the languages in $\paal[f(n)]$ 
		are closed under union, intersection, concatenation 
	    and Kleene star. 
\end{theorem}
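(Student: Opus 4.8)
The construction mirrors that in the proof of Theorem~\ref{thm:closurebaal}, the combination of padding polynomials being the only genuinely new bookkeeping. Suppose $L_1 \in \paal[f(n)]$ and $L_2 \in \paal[g(n)]$ are witnessed by \paarm s $M_1 = (\Gamma,\Sigma,A,B,p_1)$ and $M_2 = (\Gamma',\Sigma,A',B',p_2)$; I may assume that $f,g,p_1,p_2$ are monotone non-decreasing, and --- since relabelling the padding block is a bounded automatic relation, which by Theorem~\ref{the:first_order_definable} can be composed into each of $M_1$'s and $M_2$'s first instructions --- that $M_1$ and $M_2$ use a common padding symbol $@$. For each operation $\star$ the plan is to build a \paarm\ $M_3 = (\Gamma'',\Sigma,A'',B'',p)$ with $L(M_3) = L_1 \star L_2$ that decides in time $\max\{f(n),g(n)\}$ (the $\bigO(1)$ overhead from the extra forced rounds being absorbed exactly as in Theorem~\ref{thm:closurebaal}), taking $p(n) := \max\{p_1(n),p_2(n)\}$, and $p := p_1$ in the Kleene-star case.

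For union I would carry over almost verbatim the four instruction families \ref{step1}--\ref{step4} of the union construction in Theorem~\ref{thm:closurebaal}, with one change: since the register now initially holds the padded word $w@^k$ rather than a bare word, the domains in \ref{step1} and \ref{step2} are taken over $\Sigma^*@^*$ (resp.\ over $\boxplus\cdot\Sigma^*@^*$ and $\lozenge\cdot\Sigma^*@^*$), so that the leading $@$-block simply rides along unchanged while the marker $\boxplus$ (for the $M_1$-branch) or $\lozenge$ (for the $M_2$-branch) is prepended. Given $w \in L_1 \cup L_2$ and any admissible padding $v = @^k$ with $k \geq p(|w|)$, Anke plays: prepend the marker of a branch $i$ with $w \in L_i$; Boris is then forced by \ref{step2} to double the marker; from the third (odd) turn onward Anke copies a winning strategy for $(M_i,w)$ via the lifted instructions \ref{step3}--\ref{step4}, which act below the $\boxplus\boxplus$ (resp.\ $\lozenge\lozenge$) prefix. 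The point is that $k \geq p(|w|) \geq p_i(|w|)$, so $w@^k$ is a legal padded input for $M_i$ and the copied strategy remains winning, the game having length at most $\max\{f(|w|),g(|w|)\}+2$. Conversely, if $w \notin L_1 \cup L_2$ then, whichever branch Anke picks, after the two forced rounds the play is a game of $M_i$ on the legal padded input $w@^k$; since $M_i$ is a deciding machine there are no draws, so Boris has a winning strategy there and $M_3$ rejects $w$. This yields $L(M_3) = L_1 \cup L_2$, the validity of $(A'',B'')$ for every sufficiently long padding, and the claimed time bound.

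For intersection, concatenation and Kleene star I would reuse the informal constructions indicated in Theorem~\ref{thm:closurebaal} --- for intersection Boris picks which of $M_1,M_2$ to simulate; for $L_1\cdot L_2$ Anke first marks a splitting point of $w$ and Boris picks a side to verify; for $L_1^*$ Anke marks all splitting points and Boris picks a single piece --- again lifting every instruction so that the leading $@$-block (and, for concatenation and Kleene star, the marked-off pieces not chosen by Boris) is carried passively, and again taking $p = \max\{p_1,p_2\}$ (resp.\ $p = p_1$). The step I expect to be the main obstacle is precisely the verification that a \emph{single} padding polynomial suffices across all branches of the combined game: for concatenation $w$ is split as $w_1w_2$ with $|w_1|,|w_2| \le n$, and for Kleene star into up to $n$ pieces each of length $\le n$, yet Boris commits to verifying exactly one sub-word $w_j$ before any $M_i$ is simulated, so it is enough that the padding $@^k$ already present from the outset satisfies $k \ge p_i(|w_j|)$, which follows from monotonicity of $p_i$, from $|w_j| \le n$, and from $k \ge p(n) \ge p_i(n)$. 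It then only remains to observe that each of the $\bigO(1)$ extra marking/prepending rounds admits a unique legal move, hence is forced, so these rounds neither inflate the time bound beyond $\max\{f(n),g(n)\} + \bigO(1)$ nor destroy the property that both players can enforce termination; everything else is a routine transcription of the $\baal$ arguments with the padding carried along passively.
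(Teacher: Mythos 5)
Your proposal is correct and follows exactly the route the paper intends: the paper states this theorem without a separate proof, relying on the reader to transplant the construction of Theorem~\ref{thm:closurebaal} to the padded setting, which is precisely what you do. Your added bookkeeping --- taking $p=\max\{p_1,p_2\}$, assuming the padding polynomials monotone, and noting that a single initial padding block already suffices for whichever sub-word Boris elects to verify in the concatenation and Kleene-star cases --- is the right way to fill in the only genuinely new detail.
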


\noindent
One also has that any language recognised by a \paarm\ in constant time
must be regular; the converse assertion, that any regular language is recognised
by some \paarm\ in constant time, follows from an argument similar to
that for the case of \aarm s.  

	\begin{theorem}\label{the:regularpaal}
		For any constant $c$, all languages recognised by a $\paarm$ program in $c$ steps are regular.
	\end{theorem}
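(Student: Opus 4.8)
The plan is to reduce the acceptance condition of a constant-time \paarm\ to a first-order property of the register, apply Theorem~\ref{the:first_order_definable} to obtain a regular set, and then absorb the polynomial-size padding by a pumping argument on the associated automaton. Fix a \paarm\ $M=(\Gamma,\Sigma,A,B,p)$ that decides $L$ in $c$ steps. For $z\in\Gamma^*$ let $\varphi(z)$ assert that Anke, moving first from register content $z$, has a strategy that wins within at most $c$ turns against every Boris strategy. Since $c$ is a constant, $\varphi$ unfolds into a single first-order formula over the finitely many instruction relations $I_\ell$: reading the game tree from the root down to depth $c$, an Anke node contributes a block $\bigvee_{I_\ell\in A}\exists y\,[\,\cdots\wedge\cdots\,]$, a Boris node a block $\bigwedge_{I_\ell\in B}\forall y\,[\,\cdots\to\cdots\,]$, the leaf/termination clauses express that the player to move has no legal move (again first-order), and a small amount of parity bookkeeping records that the last move executed was Anke's. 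This is a genuine first-order formula of constant quantifier depth with a single free variable; by Theorem~\ref{the:first_order_definable} it defines an automatic relation, and a unary automatic relation is simply a regular language over $\Gamma$. Let $R:=\{z\in\Gamma^*:\varphi(z)\}$, and let $D$ be a DFA for $R$ with state set $Q$, start state $q_0$, and $N:=|Q|$.

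Because $M$ decides $L$ in $c$ steps, for every $w\in\Sigma^*$ we have $w\in L$ exactly when, for every padding $v=@^k$ with $k\ge p(|w|)$, Anke has a winning strategy from $wv$ using at most $c$ turns; that is,
\[
L=\{\,w\in\Sigma^*: w@^k\in R\text{ for all }k\ge p(|w|)\,\}.
\]
To see this language is regular, let $q_w=\delta^*_D(q_0,w)$. The sequence of states obtained from $q_w$ by reading $@,@@,@@@,\ldots$ is eventually periodic with tail plus period at most $N$, so for any $m$ the statement ``$\delta^*_D(q_w,@^k)$ is accepting for every $k\ge m$'' depends only on $q_w$ and on $\min(m,N)$ --- once $m\ge N$, precisely the states of the eventual cycle are visited. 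Since $p$ is a polynomial, the map $\ell\mapsto\min(p(\ell),N)$ is eventually constant, say equal to $c^\ast$ for all $\ell\ge\ell_0$. Hence there is an explicitly determined set $S\subseteq Q$ such that $w\in L$ iff $q_w\in S$, for every $w$ with $|w|\ge\ell_0$; this makes $L\cap\{w:|w|\ge\ell_0\}$ regular. Since $L\cap\{w:|w|<\ell_0\}$ is finite and hence regular, $L$ is regular.

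The routine parts are unfolding $\varphi$ (tracking the move budget and the parity) and the observation that unary automatic relations coincide with regular languages, so that Theorem~\ref{the:first_order_definable} applies directly. The step needing the most care is the last one: the padding threshold $p(|w|)$ grows with $|w|$ while the tail and period of $D$ on an all-$@$ input are fixed constants, so one must argue that past a fixed input length the padding is already ``long enough'' that only $D$'s eventual periodic behaviour is relevant, which collapses the dependence on $w$ to a dependence on $q_w$ alone (and when $p$ happens to be a small constant one instead inspects $\delta^*_D(q_w,@^k)$ for $k$ in a fixed-length window, which is again a condition on $q_w$).
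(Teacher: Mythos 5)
Your proof is correct and follows essentially the same route as the paper: first establish that the set of padded register contents from which Anke wins within $c$ turns is regular, and then eliminate the quantification over paddings $@^k$ with $k\ge p(|w|)$. The paper discharges the second step by writing $L=\{x:(\exists j)(\forall k\ge j)[x@^k\in R]\}$ and invoking closure of regular sets under first-order quantification, leaving the first step implicit, whereas you spell both out (the depth-$c$ game-tree unfolding via Theorem~\ref{the:first_order_definable}, and an explicit eventual-periodicity argument on the DFA); the underlying argument is the same.
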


\def\proofofregularpaal{	
	\begin{proof}
		Let $L$ be any regular language. 
Then clearly, $\{x@^*: x \in L\}$ is also regular.
Now suppose that for some polynomial $p$, 
$B=\{x@^k: k\geq p(|x|), x \in A\}$ is regular.
Then, $A=\{x: (\exists j)(\forall k \geq j)[x@^k\in B]\}$.
As $B$ is regular, $A$ is regular as regular sets are closed under
first order quantification.
	\end{proof}
}

\proofofregularpaal

\begin{corollary}\label{cor:regularpaal}
$\paal[1] = \reg$.
\end{corollary}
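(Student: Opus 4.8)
The plan is to establish the two inclusions of $\paal[1] = \reg$ separately. The inclusion $\paal[1] \subseteq \reg$ is immediate: it is the instance $c = 1$ of Theorem~\ref{the:regularpaal}, which already shows that every language recognised by a \paarm\ in a constant number of steps is regular. So the only substantive part is the reverse inclusion $\reg \subseteq \paal[1]$, which the text preceding the corollary claims ``follows from an argument similar to that for the case of \aarm s''.

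For $\reg \subseteq \paal[1]$ I would fix a regular language $L \subseteq \Sigma^*$ and construct a \paarm\ $M = (\Gamma,\Sigma,A,B,p)$ that decides $L$ in a single step. Take $\Gamma = \Sigma \cup \{@,\bigstar\}$ with $@,\bigstar$ fresh symbols, let $p$ be any polynomial (its value is irrelevant here, since the booster step will simply be ignored), set $B = \emptyset$, and let $A$ consist of the single instruction
$$I_1 := \{(w@^k,\ \bigstar^{|w|+k}) : w \in L,\ k \in \natnum_0\}.$$
This $I_1$ is a bounded automatic relation: it is length-preserving, so the boundedness constant is $0$; and since $L$ is regular, $I_1$ is first-order definable from the automatic set $L$ together with the automatic relations ``$y$ equals $x$ followed by a block of $@$'s'', ``$x$ and $y$ have the same length'' and ``every symbol of $y$ is $\bigstar$'', so Theorem~\ref{the:first_order_definable} applies. (Concretely, a dfa accepts $\mathit{conv}(w@^k,\bigstar^{|w|+k})$ by simulating a dfa for $L$ on the first track until the first $@$, checking that the simulated state is accepting there, checking that the rest of the first track is a block of $@$'s, and checking that the second track consists entirely of $\bigstar$'s.)

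It then remains to verify that $M$ behaves correctly. If $w \in L$, then for every padding $v = @^k$ with $k \ge p(|w|)$ the triple $(1,w@^k,\bigstar^{|w|+k})$ is a legal initial configuration; Anke's (forced) first move leaves $\bigstar^{|w|+k}$ in the register, and because $B = \emptyset$ Boris can make no move, so the game has length $1$ and Anke wins. Hence $M$ accepts $w$ in time $1$. If $w \notin L$, then $(w@^k,x) \in I_1$ has no solution $x$ for any $k$, so there is no legal initial configuration at all; Anke therefore has no strategy yielding a computation history, $w \notin L(M)$, and Boris wins every (empty) game vacuously, so $M$ rejects $w$. Thus $(A,B)$ is a valid pair, $L(M) = L$, and since both players can always force termination within one step with the winner using at most one move, $M$ decides $L$ in $1$ step, i.e.\ $L \in \paal[1]$.

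I do not anticipate a genuine difficulty here; the construction mirrors the one used for \aarm s, and the only point deserving care is the degenerate case $w \notin L$, where one must observe that ``Anke has no legal opening move'' is precisely the situation that, under the game semantics of an \aarm, counts as a (vacuous) win for Boris — this is what makes $M$ a valid \paarm\ that \emph{decides}, rather than merely recognises, $L$. One should also confirm that the mandatory polynomial padding causes no trouble: because $I_1$ is defined uniformly over all padding lengths and $B$ is empty, the padded block $@^k$ is simply carried along and overwritten in Anke's single move, so the one-step bound is unaffected by the size of the padding.
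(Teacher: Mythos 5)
Your proposal is correct and follows the same route the paper intends: the inclusion $\paal[1] \subseteq \reg$ is the $c=1$ case of Theorem~\ref{the:regularpaal}, and the reverse inclusion is the direct one-instruction construction (a bounded, in fact length-preserving, automatic relation defined from a dfa for $L$, with $B=\emptyset$) that the paper only alludes to as ``an argument similar to that for the case of \aarm s''. Your explicit treatment of the degenerate case $w \notin L$, where Anke has no opening move and Boris wins vacuously, is exactly the point the paper leaves implicit.
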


\medskip
\noindent
Our main result concerning \paarm s is a characterisation of the polynomial hierarchy (\ph) as the class $\paal[\log^*n + \bigO(1)]$. 

\begin{theorem} \label{the:ph_paal_pspace}
		$\ph = \paal[\log^* n + \bigO(1)]$. 
	\end{theorem}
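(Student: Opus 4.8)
The plan is to prove the two inclusions $\ph\subseteq\paal[\log^*n+\bigO(1)]$ and $\paal[\log^*n+\bigO(1)]\subseteq\ph$ separately. For $\ph\subseteq\paal[\log^*n+\bigO(1)]$ I would fix $L\in\Sigma_k^p$, say $w\in L$ iff $\exists y_1\,\forall y_2\cdots Q_k\,y_k\,R(w,y_1,\dots,y_k)$ with $|y_i|\le|w|^c$ and $R$ decided by a deterministic Turing machine $M_R$ in $|w|^d$ steps, and build a \paarm\ whose booster polynomial $p$ is large enough (say $p(n)=n^{2d+2}$) to carve, out of the padding $@^{p(n)}$, a delimited block for the input, one block for each certificate $y_i$, and one workspace block big enough to hold a complete computation tableau of $M_R$; the booster move itself (Anke's first move) also installs a constant number of block delimiters inside the padding. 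The next $k$ moves are used by Anke and Boris alternately to overwrite the $y_i$-block by an arbitrary equally long string over $\Sigma$; this is a single bounded automatic relation, since ``replace everything strictly between the $(i-1)$-st and the $i$-th delimiter by an arbitrary string over $\Sigma$, and fix the rest'' is a regular property of $conv(r,x)$.

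With $w,y_1,\dots,y_k$ installed, the truth of $R(w,\bar y)$ is checked in the style of the proof of Theorem~\ref{thm:nlsinaal}: the $\exists$-player for $R$ writes the entire computation tableau of $M_R$ into the workspace in a single automatic-relation step (``fill the contents of the workspace block by arbitrary symbols, keep everything else'' is again regular), and the other player then either points at a row that is not the correct successor of its predecessor, or at the first row if it is not the input, or at the last row if it is not accepting. Each such challenge reduces to a constant number of substring and position comparisons, each of which is resolved in $\log^*n+\bigO(1)$ further moves by the Log-Star Lemma; since a deterministic polynomial-time $M_R$ admits a valid accepting tableau exactly when it accepts, the $\exists$-player for $R$ wins this sub-game iff $R(w,\bar y)$ holds. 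The move count is $\bigO(1)$ bookkeeping moves, $k$ certificate moves, one tableau move and $\log^*n+\bigO(1)$ comparison moves, i.e.\ $\log^*n+\bigO(1)$ in total, with parity fixed by a constant number of forced moves; hence $L\in\paal[\log^*n+\bigO(1)]$.

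For the reverse inclusion $\paal[\log^*n+\bigO(1)]\subseteq\ph$, let the \paarm\ $M$ decide $L$ in $\log^*n+\bigO(1)$ steps with booster polynomial $p$. Since $M$ \emph{decides} $L$, for every $w$ exactly one player already has a winning strategy on the shortest admissible padding $w@^{p(|w|)}$, so $w\in L$ iff Anke wins the game on $w@^{p(|w|)}$. That game has length at most $\log^*|w|+\bigO(1)$ and, as every move changes the register length by at most a constant, is played throughout on a register of length $\bigO(p(|w|))$; legality of a move is a dfa test on a convolution and ``the opponent is stuck'' is an emptiness test on a regular language, both polynomial time. Naively unrolling the game therefore writes $L$ as a predicate $\exists S_1\,\forall S_2\cdots$ with $\log^*n+\bigO(1)$ alternations over polynomial-size strings $S_i$ and a polynomial-time matrix, and together with Theorem~\ref{thm:baaltimedspace} applied after the booster (whose first containment holds without any growth hypothesis on $f$) this already yields $\paal[\log^*n+\bigO(1)]\subseteq\pspace$.

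\textbf{The main obstacle is to push this bound down from $\pspace$ to $\ph$}, i.e.\ to replace the $\log^*n+\bigO(1)$ alternations by a constant. The plan is to argue that a \paarm\ can only ``pay for'' a super-constant number of rounds by shrinking the active portion of the register geometrically: because the stuck-condition is regular, the only robust mechanism that forces a game of length about $\log^*n$ is to drive the relevant data down a chain $n\mapsto\log n\mapsto\log\log n\mapsto\cdots$ in the manner of the Log-Star protocol, and every \paarm\ meeting the time bound can be put into a normal form that makes this explicit. One then checks that after a constant number $k_0$ (depending only on $M$) of genuine alternations the active data has length $n^{o(1)}$, whence the residual game tree has $n^{o(1)}$ leaves and its value is a polynomial-time predicate of the round-$k_0$ configuration; substituting that predicate below the first $k_0$ alternations places $L$ in the fixed level $\Sigma_{k_0}^p\subseteq\ph$. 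The delicate step, and where most of the work lies, is establishing this normal form for an \emph{arbitrary} \paarm\ satisfying the time bound — in essence the structural claim that the $\log^*n$-step budget cannot be spent on keeping a polynomially large part of the register live for $\omega(1)$ rounds — rather than only for the machines produced in the first half of the proof.
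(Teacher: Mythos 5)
Your first inclusion, $\ph \subseteq \paal[\log^*n+\bigO(1)]$, is essentially the paper's argument: the paper works from the alternating-Turing-machine characterisation of $\Sigma^P_k$ and has the players alternately guess whole computation segments stored as convoluted variables, whereas you work from the quantifier characterisation and guess certificates followed by one tableau for the verifier; both reduce all challenges to substring comparisons settled by the Log-Star Lemma and its extension (Lemma~\ref{lem:turinglogstar}), and either variant is fine.

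The reverse inclusion is where your proposal has a genuine gap, and you have in effect conceded it yourself. Unrolling the game gives $\log^*n+\bigO(1)$ alternations over polynomial-size witnesses, which is only $\pspace$; your route from there to $\ph$ rests on an unproved structural claim --- that any \paarm\ meeting the time bound can be normalised so that after constantly many ``genuine'' alternations the live data has length $n^{o(1)}$. Nothing forces an arbitrary \paarm\ to behave this way: the register stays polynomially long throughout, the instructions are arbitrary bounded automatic relations, and there is no a priori reason the later rounds must operate on a shrinking active region. The paper avoids this entirely by working at the level of the automata rather than the register contents: it proves (by induction, via the $\tower(d,c)$ function) that the predicate ``player $X$ wins within $k$ further rounds from register content $x$'' is recognised by a dfa of size at most $\tower(k+2,c)$, where $c$ is the state count of the update automata --- each round costs one product construction plus one determinization, i.e.\ one exponentiation. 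Determinizing the game from the back therefore absorbs all but constantly many of the $\log^*n+\bigO(1)$ rounds into a single dfa of size about $\log\log n$, which is computable in logspace; the constantly many remaining rounds become a fixed number of polynomially bounded quantifier alternations in front of a polynomial-time (indeed logspace) matrix, placing the language in a fixed level of $\ph$. This back-to-front determinization argument is the missing idea; without it (or a proof of your normal-form claim, which would be at least as hard) your second inclusion stops at $\pspace$.
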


	\noindent
	To help with the proof, we first extend the Log-Star Lemma as follows.
Recall that a {\em configuration} (or {\em instantaneous description}) of a 
Turing Machine is represented by a string $xqw$, where $q$ is the current state of the machine and $x$ and $w$ are strings over the tape alphabet, such that the current tape contents is $xw$ and the current head location is the first symbol of $w$ \cite{HU79}. 
	
	\begin{lemma} \label{lem:turinglogstar}
	Checking the validity of a Turing Machine step, i.e., whether
a configuration of Turing Machine follows another
configuration (given as input, separated by a special separator symbol)
can be done in $\baal[\log^* n + \bigO(1)]$, where $n$ 
is the length of the shorter of the two configurations.
	\end{lemma}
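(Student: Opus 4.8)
The plan is to reduce the problem to Lemma~\ref{lem:logstar1} by exploiting the locality of a Turing-machine step. If $C=x\,q\,a\,z$ is a configuration (state $q$, head reading $a$) and $C'$ is its successor under a transition $(q,a)\to(q',b,D)$, then $C$ and $C'$ agree, as tape words, on a common prefix and a common suffix and differ only inside a constant-length window around the head, whose rewriting is completely dictated by one transition. Hence, up to a bounded local check, deciding $C\vdash C'$ amounts to deciding the equality of two substrings, which is exactly what the Log-Star Lemma does in $\log^*n+\bigO(1)$ steps.

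Concretely, take the input to be $C\,\$\,C'$ for a fresh separator $\$$. The opening move belongs to Anke, the player asserting $C\vdash C'$, and is implemented by one bounded automatic relation that: (i) checks that $C$ and $C'$ are well-formed configurations (exactly one state symbol each, all other symbols over the tape alphabet), a regular condition; (ii) locates constant-length windows $W,W'$ around the state symbols of $C,C'$ and verifies that $(W,W')$ is consistent with some transition $(q,a)\to(q',b,D)$ of the machine — again regular, since a DFA scanning the marked string can remember the constant-size $W$ and then test $(W,W')$ against the finite transition table, in particular that the state symbol has shifted by the right amount, that the old head cell now carries $b$, and that the remaining cells of the window are copied faithfully; and (iii) rewrites the register to $u'\,\$\,v'$, where $u'$ (resp.\ $v'$) is obtained from $C$ (resp.\ $C'$) by replacing $W$ (resp.\ $W'$) with a block $\star^{m}$ over a fresh symbol $\star$, the constant $m$ being chosen (depending only on the transition) so that the two $\star$-blocks have equal length and the lengths of $u',v'$ change by only a bounded amount. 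If no such legal move exists — for instance the windows fit no transition — Anke is stuck and loses, which is correct, because then $C\not\vdash C'$.

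After this move the register holds a string $u'\,\$\,v'$ lying in the scope of Lemma~\ref{lem:logstar1}, with $u=\pi\,\star^{m}\,\sigma$ and $v=\pi'\,\star^{m}\,\sigma'$, where $\pi,\sigma$ (resp.\ $\pi',\sigma'$) are the tape parts of $C$ (resp.\ $C'$) lying outside the window, and the window size is chosen so that prefix and suffix indices line up exactly across $C$ and $C'$. Since none of $\#,\$,\star$ occurs in $\pi,\sigma,\pi',\sigma'$, the words $u$ and $v$ are equal iff $\pi=\pi'$ and $\sigma=\sigma'$, i.e.\ iff $C$ and $C'$ agree away from the head. From here Anke and Boris simply play the $\baal[\log^*n+\bigO(1)]$-protocol supplied by Lemma~\ref{lem:logstar1} for the equality language $\{u'\$v':u=v\}$: if $C\vdash C'$ then $u=v$ and Anke wins, while if $C\not\vdash C'$ then either the opening move was impossible (handled above) or $u\neq v$ and Boris wins. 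By that lemma — and the observation in its proof that either player can force the game to end within $\log^*n+\bigO(1)$ rounds — the remaining play lasts $\log^*n+\bigO(1)$ moves, $n$ being the configuration length up to a constant factor since $u,v$ differ in length from $C,C'$ only by a constant. Together with Anke's single opening move this gives $\log^*n+\bigO(1)$ steps in total, and the procedure accepts exactly the pairs with $C\vdash C'$.

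The part I expect to require the most care is the design of the opening relation: fixing the window size and the constant $m$ so that a single bounded automatic relation correctly handles all boundary cases of the usual $xqw$ encoding — a right move that extends the tape with a fresh blank, the head sitting at the leftmost cell (where there is no cell to its left and a left move is illegal), and blank-suppression conventions — while still outputting two masked strings whose $\star$-blocks have exactly equal length, so that one equality test captures both prefix and suffix agreement despite the state symbol shifting by one cell under a move. Should equalising the block lengths inside one automatic relation prove awkward, a safe alternative is to insert one extra move in which Boris picks whether to challenge prefix-agreement or suffix-agreement; this still reduces to a single call to the Log-Star Lemma at the cost of $\bigO(1)$ extra rounds. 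A minor additional point is turn parity when splicing in the Log-Star sub-protocol — Anke's opening move leaves the turn with Boris — which is fixed by reorganising or padding the first few moves, harmless since the bound tolerates additive constants.
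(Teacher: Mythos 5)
Your proposal is correct and follows essentially the same route as the paper's proof: dispatch the local head-window check by a constant number of regular (finite-automaton) checks, and reduce the ``tape copied correctly away from the head'' condition to the string-equality protocol of the Log-Star Lemma, with the same attention to turn parity, bounded length difference, and boundary cases of the $xqw$ encoding. The only difference is cosmetic: you mask the head windows with equal-length blocks of a fresh symbol before invoking the equality protocol, whereas the paper keeps the configurations intact, has Anke mark the old head position on the second configuration, and restricts Boris's block challenge to positions away from the head.
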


\def\proofofturinglogstar{	
	\begin{proof}
		Let the input be the two configurations of the Turing Machine, 
where the second configuration is supposedly the successive step of the first one and separated by a separator symbol.  
Now there are two things that need to be checked: (1) The configuration is "copied" correctly from the previous step. Note that a valid Turing Machine transition will change only the cell on the tape head and/or both of its neighbour; thus "copied" here means the rest of the tape content should be the same; (2) The local Turing step is correct.
	
	\noindent
		For the first checking, the player who wants to verify, e.g. Anke, will give the infinite valid marking as used in the Log-Star Lemma. In addition, Anke also marks the position of the old tape head on the second configuration. Boris can then challenge the following:
(a) The Log-Star Lemma marking is not valid;
(b) The old tape head position is not marked correctly (in the intended position) on the second configuration;
(c) The string in a specified block differs, but not the symbols around the tape head;
(d) The length difference of the configuration is not bounded by a constant.
				
		\noindent
		Challenge (a) can be done in $\log^* n + \bigO(1)$ steps; this 
follows from the Log-Star Lemma. Challenge (b) can also be done in $\log^* n+\bigO(1)$ steps where both players reduce the block to focus on that position, and finally check whether it is on the same position or not. Challenge (c) can also be done in $\log^* n + \bigO(1)$ steps; this 
follows again from the Log-Star Lemma. Note that if Boris falsely challenges that the different symbol is around the tape head, Anke can counter-challenge by pointing out that at least one 
of its neighbours is a tape head. For challenge (d), note that a valid Turing Machine transition will only increase the length by at most one. Thus, Boris can pinpoint the last character of the shorter configuration and also its pair on another configuration, then check whether the longer one is only increased by up to one in length. This 
again can be done in $\log^* n + \bigO(1)$ steps.
		
		For the second checking about the correctness of the Turing step, it can be done in a constant number of checks as a finite automaton can check the computation and determine whether the Turing steps are locally correct, that is, each state is the successor state of the previous steps head position and the symbol to the left or right of the new head position is the symbol following from the transition to replace the old symbol and so on. Therefore, all-in-all the validity of a Turing Machine step can be 
checked in $\baal[\log^* n + \bigO(1)]$ steps.
	\end{proof}
}

\proofofturinglogstar
	
	\begin{proof}[Proof of Theorem \ref{the:ph_paal_pspace}]
	We first prove that $\paal[\log^* n + \bigO(1)] \subseteq \ph$.  Define a binary function $\tower$ recursively as follows:
	\[
	\begin{aligned}
	\tower(0,c) &= 1 \\
	\tower(d+1,c) &= 2^{c\cdot \tower(d,c)}.
	\end{aligned}
	\]
	We prove by induction that for each $c \geq 1$, there is a $c'$ such that for all $d$, $\tower(d+c',1) > \tower(d,c)$.  When $c = 1$, $\tower(d,c)$ gives the usual definition of the tower function.  In particular, when $c = 1$, one has $\tower(d+1,c) > \tower(d,c)$ for all $d$, so the induction statement holds for all $c = 1$ and all $d$.  Suppose that $c > 1$.  Then there is some $c'$ large enough so that $\tower(c',1) > c^2 = c^2 \cdot \tower(0,c)$, and so the induction statement holds for $d = 0$.  Assume by induction that $c > 1$ and that $\tower(c''+d,1) > c^2 \cdot \tower(d,c)$.  Then $2^{\tower(c''+d,1)} > 2^{c^2 \cdot \tower(d,c)} \geq (2^{c\cdot\tower(d,c)})^c > c^2 \cdot 2^{c\cdot\tower(d,c)}$, and therefore $\tower(c''+d+1,1) > c^2 \cdot \tower(d+1,c)$.  This completes the induction step.  

Say that Anke (resp.~Boris) {\em wins within $k$ steps} iff for some $n \leq k$, Boris (resp.~Anke) starts the $(n+1)$-st turn of the game and he (resp.~she) cannot make a move (it is assumed that both players aim to win and they play optimally).       	
Suppose that $c$ is the number of states of the automaton $M$ corresponding to the update function for the configuration of an \baal\ algorithm. Then there is a dfa of size at most $\tower(k+2,c)$ that recognises whether or not a player wins within $k$ steps. We prove this claim by induction on $k$, showing that there is a dfa of size $\tower(k+1,c)$ such that for any string $x$, the dfa accepts $x$ iff Anke wins within $k$ steps on input $x$ when she (resp.~Boris) starts the game. We can similarly show that there is a dfa of size $\tower(k+1,c)$ such that for any string $x$, the dfa accepts $x$ iff Boris wins within $k$ steps when he (resp.~Anke) starts, so the union of the languages accepted by the two dfa's is recognised by a dfa of size at most $\tower(k+2,c)$.  
      
By the earlier definition of ``winning within $k$ steps'', Anke cannot win in zero steps if she starts the game. If Boris starts, then Anke wins in zero steps on input $x$ iff there is no $y$ such that $M$ accepts $(x,y)$. The latter condition can be checked with a dfa of size $2^c = \tower(1,c)$: one first constructs from $M$ an nfa $M'$ that has the same set of $c$ states as $M$ and for any $x \in \Gamma$, $M'$ on input $x$ goes from state $p$ to state $q$ iff there is some $y \in \Gamma$ such that $M$ on input $conv(x,y)$ goes from state $p$ to state $q$; $M'$ is then converted into another nfa $M''$ with the same number of states as $M'$ that accepts the complement of the language accepted by $M'$, after which $M''$ can be converted into a dfa of size $2^c$.  This verifies the statement for the base case.

Assume inductively that for all $k' \leq k$, there is a dfa of size $\tower(k'+1,c)$ that accepts $x$ iff Anke wins within $k'$ steps on input $x$ when she (resp.~Boris) starts. Suppose it is Anke's turn and we need to check if she wins within $k+1$ steps. Let $M_k$ be a dfa of size $\tower(k+1,c)$ that accepts $x$ iff Anke wins within $k$ steps on input $x$ when Boris starts the game. Define an nfa $N$ as follows. For each state $p$ of $M$, make $\tower(k+1,c)$ states $(p,q_1),\ldots,(p,q_{\tower(k+1,c)})$, where $q_1,\ldots,q_{\tower(k+1,c)}$ are the states of $M_k$. Then each state $(p,q)$ on input $x \in \Gamma$ goes to each state $(p',q')$ such that for some $y \in \Gamma$, $M$ on input $conv(x,y)$ goes from state $p$ to state $p'$ and $M_k$ on input $y$ goes from state $q$ to state $q'$. The start state of $N$ is $(p_1,q_1)$, where $p_1$ and $q_1$ are the start states of $M$ and $M_k$ respectively, and the final states of $N$ are states $(p_f,q_f)$ such that $p_f$ and $q_f$ are final states of $M$ and $M_k$ respectively.  Then $N$ accepts $x$ iff there is a string $y$ such that $M$ accepts $(x,y)$ and Anke wins within $k$ steps on input $y$ when Boris starts; in other words, $N$ accepts $x$ iff Anke wins within $k+1$ steps when she starts with input $x$.   
	 The nfa $N$, which is of size $c \cdot \tower(k+1,c)$, can then be converted into a dfa $M'$ of size $2^{c \cdot \tower(k+1,c)} = \tower(k+2,c)$, as required. 
	
Suppose now that it is Boris' turn and we need to check if Anke wins within $k+1$ steps. By the induction hypothesis, there is a dfa $M'_k$ of size $\tower(k+1,c)$ that accepts $x$ iff Anke wins within $k$ steps on input $x$ when she starts. Let $M''_k$ be a dfa with the same number of states as $M'_k$ that accepts the complement of the language accepted by $M'_k$, that is, $M''_k$ accepts $x$ iff Anke does not win within $k$ steps on input $x$ when she starts. We build an nfa $N'$ with $c\cdot \tower(k+1,c)$ states following a construction similar to that of $N$ in the previous case but replacing $M_k$ by $M''_k$. Then $N'$ accepts $x$ iff there is a string $y$ such that $M$ accepts $conv(x,y)$ and Anke does not win within $k$ steps on input $y$ when she starts; in other words, $N'$ accepts $x$ iff Anke does not win within $k+1$ steps when Boris starts.  The nfa $N'$ can be converted into an nfa $N''$ with the same number of states as $N'$ that accepts the complement of the language accepted by $N'$ -- that is, $N''$ accepts $x$ iff Anke wins within $k+1$ steps when Boris starts.  The nfa $N''$ can in turn be converted into a dfa of size $2^{c\cdot \tower(k+1,c)} = \tower(k+2,c)$.  This completes the induction step.       
			
By the preceding result on the function $\tower$, $\tower(k+2,c)$ is bounded by $\tower(k+c',1)$ for some $c'$.  Thus any language in $\paal[\log^*n + \bigO(1)]$ is recognised in a constant number of alternating steps plus a predicate that can be computed by a dfa of size $\tower(\log^*n-3,1)$.  This dfa can be computed in \ls\ since $\log^*n$ can be computed in logarithmic space.  Then one constructs the dfa by determinizing out the last step until it reaches size $\log \log n$. This happens only when only constantly many steps are missing by the above tower result. These constantly many steps can be left as a formula with alternating quantifiers followed by a dfa computed in logarithmic space of size $\log \log n$. Thus the formula whether Anke wins is in \ph. Similarly for the formula whether Boris wins and so the overall decision procedure is in \ph.   
	
		For the proof that $\ph \subseteq \paal[\log^*n + \bigO(1)]$, we first note that \ph\ can be defined with alternating Turing machines \cite{Chandra76}. We define 
$\Sigma^P_k$ to be the class of languages recognised by alternating Turing Machine in polynomial time where the machine alternates between existential and universal states $k$ times starting with existential state. We also define $\Pi^P_k$ similarly but starting with universal state. \ph\ is then defined as the union of all $\Sigma^P_k$ and $\Pi^P_k$ for all $k \geq 0$. We now show $\Sigma^P_k \cup \Pi^P_k \subseteq \paal[\log^* n + \bigO(1)]$ for any fixed constant $k$. As the alternating Turing Machine runs in polynomial time on each alternation, 
the full computation (i.e., sequence of configurations)
in one single alternation can be captured 
non-deterministically in $p(m)$ Turing Machine steps, 
for some polynomial $p$ (which we assume to be bigger than linear), 
where $m$ is the length of the configuration at the start of the alternation.
In a $\paarm$-program, Anke first invokes a booster step to have a string of 
length at least $p^k(n)$. After that, Boris and Anke will alternately guess 
the 
full computation of the algorithm 
of length $p(p^i(n))$, $i=0,1,\ldots$, in their respective 
alternation: Boris guesses the first $p(n)$ computations 
(the first alternation), Anke then guesses the next $p(p(n))$ computations 
on top of it (the second alternation), etc. In addition, they also mark 
the position of the read head and symbol it looks upon in each step. Ideally, the $\paarm$-program will take $k$ alternating steps to complete the overall algorithm. 
Note that a $\paarm$ can keep multiple variables in the
register by using convolution, as long as the number of variables is a constant.
Thus, we could store the $k$ computations in $k$ variables: 
$v_1, v_2, \cdots, v_k$. Now each player can have the following 
choices of challenges to what the other player did:
(1) Copied some symbol wrongly from the input i.e. in $v_1$;
(2) Two successive Turing Machine steps in the computation are not valid (at some $v_i$);
(3) The last Turing Machine step on some computation (at some $v_i$) does not follow-up with the first Turing Machine step on the next computation (at $v_{i+1}$).
		
		\noindent
		All the above challenges can be done in
$\log^*n + \bigO(1)$ steps by a slight modification of Lemma \ref{lem:turinglogstar}. 
In particular, the third challenge needs one 
to compare the first Turing Machine configuration of $v_{i+1}$ and the 
last Turing Machine configuration $v_i$, which can be done in a way similar
to the proof of Lemma \ref{lem:turinglogstar}.
Thus, $\Sigma^P_k \cup \Pi^P_k \subseteq \paal[\log^* n + \bigO(1)]$ for every fixed 
constant $k$, therefore $PH \subseteq \paal[\log^* n + \bigO(1)]$. 
Corollary \ref{cor:aalpolyequalspspace} implies 
$\paal[\log^* n + \bigO(1)] \subseteq \pspace$.
	\end{proof}
	
\begin{remark*}
As $\paal[\log^*n + \bigO(1)] = \ph$, $\paal[\log^*n + \bigO(1)]$ is closed under Turing reducibility. Similarly one can show that $\paal[\bigO(\log^*n)]$ is closed under Turing reducibility.  After Anke invokes the booster step, Boris will guess the accepting computation together with all of the oracle answers. Anke then can challenge Boris on either the validity of the computation (without challenging the oracle) or challenge one of the oracle answers. Both challenges can be done in the same fashion as in Theorem \ref{the:ph_paal_pspace} but the latter needs one additional step to initiate the challenge of the oracle algorithm.  	
\end{remark*}
	
\noindent
In addition, we also get the following corollary about polynomial-time Turing reducibility. We recall that a {\em polynomial-time Turing reduction from problem $A$ to problem $B$} is an algorithm to solve $A$ in a polynomial number of steps 
by making a polynomial number of calls to an oracle solving $B$.
	
\begin{corollary}\label{cor:paalturingred}
		$\paal[\bigO(\log^* n)]$ is closed under polynomial-time Turing reduci\-bility.
	\end{corollary}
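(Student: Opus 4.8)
The plan is to reduce the statement to what has already been established in Theorem~\ref{the:ph_paal_pspace}, namely the identity $\ph = \paal[\log^*n + \bigO(1)]$, together with the classical fact that $\ph$ is closed under polynomial-time Turing reducibility. First I would observe that $\paal[\bigO(\log^*n)]$ sits between $\paal[\log^*n + \bigO(1)]$ and $\pspace$: the lower containment is immediate since $\log^*n + \bigO(1) = \bigO(\log^*n)$, and the upper containment follows from Theorem~\ref{thm:baaltimedspace} (in the form already used in Proposition~\ref{prop:psubsetpspace}, where the hypothesis $f(n)\geq n$ is not needed), giving $\paal[\bigO(\log^*n)] \subseteq \dspace[\bigO((n+\log^*n)\cdot\log^*n)] \subseteq \pspace$. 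Hence every language in $\paal[\bigO(\log^*n)]$ lies in $\pspace$, and conversely $\paal[\log^*n+\bigO(1)] = \ph \subseteq \paal[\bigO(\log^*n)]$.

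Next I would argue that $\paal[\bigO(\log^*n)]$ actually coincides with $\ph$ under the further assumption that $\ph = \pspace$, and handle the general case directly via the mechanism sketched in the Remark following Theorem~\ref{the:ph_paal_pspace}. Concretely: suppose $A$ is polynomial-time Turing reducible to some $B \in \paal[\bigO(\log^*n)]$. Since $B \in \pspace$, the whole oracle computation for $A$ runs in polynomial space, so $A \in \pspace$. It therefore suffices to show $\pspace \subseteq \paal[\bigO(\log^*n)]$, which does not follow from the $\ph$ characterisation alone. I would instead give a \paarm\ for $A$ directly: Anke invokes the booster step to create a padding of size polynomial in $n$; Boris then guesses, within a single turn, the entire oracle-machine computation for $A$ on the input, consisting of the polynomially many configurations of the base polynomial-time machine together with the list of all oracle queries and their claimed yes/no answers; Anke may then either challenge the validity of the base computation relative to the claimed answers (done exactly as in the proof of Theorem~\ref{the:ph_paal_pspace} via Lemma~\ref{lem:turinglogstar}, requiring $\log^*n + \bigO(1)$ steps), or pick one claimed oracle answer and challenge it by starting, with one extra initiating step, the corresponding \paarm\ subcomputation deciding membership of that query string in $B$ — where a ``no'' answer is verified by running the \paarm\ for $B$ with the roles of Anke and Boris swapped.

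The key points to check are: (i) the oracle query strings have length polynomial in $n$, so a single booster step of large enough polynomial degree supplies enough padding to accommodate recursively nesting the $B$-subcomputation, which itself only needs polynomial padding relative to its own (polynomial) input; (ii) swapping Anke and Boris correctly decides the complement of $B$, which is fine since $\paal[\bigO(\log^*n)] = \ph$ is closed under complementation (or one simply notes that the \paarm\ model is symmetric in the two players); and (iii) the number of alternations stays bounded — Boris's guess of the computation plus query answers uses one turn, and each verification branch closes in $\log^*n + \bigO(1)$ further turns, so the total is $\bigO(\log^*n)$. The main obstacle I anticipate is (i): making precise that a \emph{single} polynomial booster, chosen with sufficiently high degree at the outset, is enough to run both the outer computation and every nested $B$-subcomputation, given that the query strings are only polynomially longer than the original input and that a bounded automatic relation can reuse padded register space. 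This is essentially the same bookkeeping as in the Remark after Theorem~\ref{the:ph_paal_pspace}, and I would invoke that Remark to keep the argument short, remarking that the only new ingredient here is that the reduction and the number of oracle calls are polynomial, which is exactly what a single polynomial-degree booster absorbs.
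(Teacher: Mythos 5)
Your proposal is correct and, after the initial detour through $\pspace$ (which you rightly abandon), lands on exactly the paper's argument: Anke invokes the booster step, Boris guesses the entire oracle computation together with all query answers, and Anke challenges either a step of the base computation or a single oracle answer, the latter costing one extra initiating step before the $\log^*n+\bigO(1)$ subgame for the oracle language. The extra care you take over the padding bookkeeping and the role swap for ``no'' answers only makes explicit what the paper's proof leaves implicit.
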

\def\proofofpaalturingred{	
	\begin{proof}
		 After Anke invokes the booster step, Boris will guess the accepting computation together with all of the oracle answers. Anke then can challenge Boris on either the validity of the computation (without challenging the oracle) or challenge one of the oracle answers. Both challenges can be done in the same fashion as in Theorem \ref{the:ph_paal_pspace} but the latter needs one additional step to initiate the challenge of the oracle algorithm.
	\end{proof}
}

\proofofpaalturingred

\noindent
In order to obtain the next corollary, we use the fact that the problem of deciding
$\tqbf_{f}$ -- the class of true quantified Boolean formulas with $\log^* n + f(n) + \bigO(1)$ 
alternations -- does not belong to any fixed level of the polynomial hierarchy (\ph) when 
\ph\ does not collapse. 

	\begin{corollary}\label{cor:paalsupsetph}
	If \ph\ does not collapse and $f$ is a logspace computable increasing and unbounded function, then $\baal[\log^*n + f(n) + \bigO(1)] \not\subseteq \ph$.
	\end{corollary}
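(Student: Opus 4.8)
The plan is to exhibit a single language in $\baal[\log^* n + f(n) + \bigO(1)]$ that, assuming \ph\ does not collapse, avoids every level $\Sigma^P_k$ of the polynomial hierarchy. Let $\tqbf_f$ be the set of true quantified Boolean formulas $\Phi = \exists X_1\, \forall X_2\, \exists X_3 \cdots$ with a matrix in conjunctive normal form and exactly $f(n)$ alternating quantifier blocks, $n$ being the length of $\Phi$ under an encoding that (as in the proof of Theorem \ref{the:sat_baal}) reserves a blank for a truth value after each variable occurrence, tags each occurrence with the index of the block binding it, makes the first block $\exists$, and renders well-formedness --- including having exactly $f(n)$ blocks --- a regular condition; the last item uses that $f$ is logspace computable. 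The corollary then follows from two claims: $\tqbf_f \in \baal[\log^* n + f(n) + \bigO(1)]$, and $\tqbf_f \notin \ph$ when \ph\ does not collapse. (This complements Theorem \ref{the:ph_paal_pspace}: there one booster step buys all of \ph\ in $\log^* n + \bigO(1)$ rounds; here, without a booster but with $f(n)$ additional rounds, one passes \emph{beyond} \ph.)

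To prove $\tqbf_f \in \baal[\log^* n + f(n) + \bigO(1)]$ I would have the \aarm\ play the quantifier game of $\Phi$, generalising the protocol of Theorem \ref{the:sat_baal} from one to $f(n)$ alternations. For $i = 1, \ldots, f(n)$, round $i$ is played by Anke if $i$ is odd (so $Q_i = \exists$) and by Boris otherwise, and that player fills in truth values for exactly the occurrences tagged with block $i$; one bounded automatic relation enforces this via the tags, so the assignment phase takes exactly $f(n)$ strictly alternating rounds. A dfa then tests whether the literal assignment satisfies the matrix. After a constant number of negotiation rounds --- using the $\bigO(1)$ auxiliary state that can be kept in the register --- the player disadvantaged by the verdict (Boris if the matrix is satisfied, Anke if not) either concedes or issues one consistency challenge, naming two occurrences claimed to bind the same variable but carrying different truth values; the challenged player must refute it by proving, via Lemma \ref{lem:logstar1}, that the two variable names are distinct substrings of the input, which costs $\log^* n + \bigO(1)$ rounds, and failing to refute loses the game. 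A player who assigns some variable inconsistently is caught by such a challenge, while an honest player always defeats a bogus one; hence if $\Phi$ is true Anke wins by playing an honest winning $\exists$-strategy, and if $\Phi$ is false Boris wins by playing honestly. At most one invocation of Lemma \ref{lem:logstar1} occurs before the game ends, so either player can force termination within $f(n) + \bigO(1) + (\log^* n + \bigO(1)) = \log^* n + f(n) + \bigO(1)$ rounds.

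To prove $\tqbf_f \notin \ph$ when \ph\ does not collapse, I would argue by contradiction that $\tqbf_f \notin \Sigma^P_k$ for each fixed $k$. Let $\ell$ be the least odd integer exceeding $k$; then the set $L$ of true QBFs with $\ell$ alternating blocks starting (and, $\ell$ being odd, ending) with $\exists$ and a CNF matrix is $\Sigma^P_\ell$-complete, where the auxiliary Tseitin variables of the Cook--Levin encoding of the innermost predicate are absorbed into the final $\exists$-block. Given an instance $\psi$ of $L$ of length $m$, put $N = \max(m, n_\ell)$, where $n_\ell$ is the least length with $f(n_\ell) \geq \ell$ --- a finite constant, since $f$ is unbounded --- so that $f(N) \geq \ell$ because $f$ is increasing; then append $f(N) - \ell$ alternating dummy quantifier blocks over fresh variables and pad the matrix encoding to length exactly $N$, yielding a true CNF-matrix QBF with $f(N)$ blocks beginning with $\exists$ whose truth value is that of $\psi$ --- a valid $\tqbf_f$ instance. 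This is a logspace, hence polynomial-time, many-one reduction of $L$ to $\tqbf_f$. If $\tqbf_f \in \Sigma^P_k$ we would get $\Sigma^P_\ell \subseteq \Sigma^P_k$ with $\ell > k$, collapsing \ph\ --- a contradiction. Hence $\tqbf_f \notin \Sigma^P_k$ for all $k$, so $\tqbf_f \notin \ph$.

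The main obstacle, as I see it, is not either high-level idea but the bookkeeping of the $\baal$ protocol: arranging the constant-round ``concede-or-challenge'' negotiation to respect strict alternation and the last-mover-wins convention so that cheating by \emph{either} player is always punished and an honest player is never forced to concede unfairly, and nailing down an encoding for which ``exactly $f(n)$ alternating blocks'' is genuinely a regular property. Both are of the same flavour as details already handled informally around Theorem \ref{the:sat_baal} --- together with the standing observation that a constant amount of extra state can be kept in the register --- so I would treat them at that level of rigour; the non-membership in \ph\ is then routine.
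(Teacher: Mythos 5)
Your overall route is the same as the paper's: the witness language is (a version of) $\tqbf_f$, membership in $\baal[\log^*n + f(n) + \bigO(1)]$ is obtained by playing the quantifier game for $f(n)$ alternating rounds and then running the Log-Star/\cnfsat-style verification of Theorem~\ref{the:sat_baal} on the matrix in $\log^*n+\bigO(1)$ further rounds, and non-membership in \ph\ follows because a language encoding unboundedly many quantifier alternations cannot sit in any fixed level of \ph\ without collapsing it. Your padding reduction for the second half is more explicit than the paper's one-line appeal to this fact, and it is fine.

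The one step that fails as written is the claim that the encoding can render well-formedness, ``including having exactly $f(n)$ blocks'', a regular condition. For a general logspace-computable increasing unbounded $f$ (say $f(n)=\lfloor\log\log n\rfloor$), the set of strings of length $n$ containing exactly $f(n)$ quantifier blocks is not regular: a dfa cannot compare a block count against a length-dependent threshold whose level sets are not eventually periodic. So this check cannot be pushed into the automaton, and it is precisely where the hypothesis that $f$ is logspace computable has to earn its keep. The paper resolves it inside the game rather than inside the encoding: either player may, in the first round, challenge that the formula has the wrong number of alternations, after which the logspace computation of $f(n)$ and its comparison with the actual block count are verified in $\log^*n+\bigO(1)$ further rounds, using the \nll-simulation of Theorem~\ref{thm:nlsinaal} together with the Log-Star Lemma. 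You flagged this as your unresolved obstacle, and the fix is available in the paper's own toolkit, but ``make it regular'' is not a viable fix and should be replaced by this challenge-and-verify mechanism; the rest of your protocol bookkeeping is at the same level of informality as the paper's own treatment around Theorem~\ref{the:sat_baal}.
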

\def\proofofpaalsupsetph{	
	\begin{proof}
	Suppose \ph\ does not collapse. Let $\tqbf_{f}$ be the problem $\{\phi: \phi~\mbox{is a}$ $\mbox{true fully quantified Boolean formula with $\log^*|\phi|+f(|\phi|)+\bigO(1)$ alternations}$ \linebreak[4] $\mbox{of quantifiers}\}$, and consider any Boolean formula.
	The computation with choice of variables is placed in $f(n)$ rounds of Anke and Boris moves followed by $\log^*n + \bigO(1)$ moves to evaluate the formula. Furthermore, if the formula has more than $f(n)$ alternations (and is thus invalid), any of the two players can challenge this in the first round and then it takes $\log^*n + \bigO(1)$ rounds to do the logspace computation of $f(n)$ and to verify that the formula does not belong to $\tqbf_f$.
	\end{proof}
} 

\proofofpaalsupsetph

\noindent
Finally, we observe that if $\ph = \pspace$, then (i) by Theorem \ref{the:ph_paal_pspace}, $\ph = \paal[\bigO(\log^*n)] = \pspace$; (ii) by Proposition \ref{prop:psubsetpspace}, $\p \not\subseteq \baal[\bigO(\log^*n)]$; thus $\baal[\bigO(\log^*n)]$ would be properly contained in $\paal[\bigO(\log^*n)]$.

\begin{proposition}\label{prop:aallogstarsubsetpaallogstar}
If $\ph = \pspace$, then $\baal[\bigO(\log^* n)] \subset \paal[\bigO(\log^* n)]$.
\end{proposition}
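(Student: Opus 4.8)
The plan is to sandwich both classes strictly inside \pspace\ under the hypothesis and read off the proper containment. First I would record the trivial inclusion: by the remark following Definition~\ref{defn:paal}, a \paarm-program simulates any \aarm-program by ignoring the padding, so $\baal[\bigO(\log^* n)] \subseteq \paal[\bigO(\log^* n)]$ without any assumption. It then remains to separate the two classes, and I would do this by showing that, assuming $\ph = \pspace$, the left-hand class lies strictly below \pspace\ while the right-hand class equals \pspace.

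Next I would pin down $\paal[\bigO(\log^* n)]$. By Theorem~\ref{the:ph_paal_pspace}, $\paal[\log^* n + \bigO(1)] = \ph$, and since $\log^* n + \bigO(1)$ is $\bigO(\log^* n)$ we get $\pspace = \ph \subseteq \paal[\bigO(\log^* n)]$ under the hypothesis. For the reverse inclusion I would reuse the observation already made at the end of the proof of Theorem~\ref{the:ph_paal_pspace} (and in the remark after Definition~\ref{defn:paal}): a booster step can be simulated by an \aarm\ in $\bigO(p(n))$ steps, so a \paarm\ running in $\bigO(\log^* n)$ steps is simulated by an \aarm\ running in $\bigO(p(n)) = \bigO(n^k)$ steps for suitable $k$; hence by Corollary~\ref{cor:aalpolyequalspspace}, $\paal[\bigO(\log^* n)] \subseteq \bigcup_k \baal[\bigO(n^k)] = \pspace$. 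Combining the two inclusions, $\paal[\bigO(\log^* n)] = \pspace$ whenever $\ph = \pspace$.

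Finally I would observe that $\baal[\bigO(\log^* n)] \subsetneq \pspace$ holds \emph{unconditionally}; this is exactly the chain of inclusions used in the proof of Proposition~\ref{prop:psubsetpspace}. By Theorem~\ref{thm:baaltimedspace} (whose first inclusion does not need the hypothesis $f(n) \geq n$), $\baal[\bigO(\log^* n)] \subseteq \dspace[\bigO((n + \log^* n)\cdot \log^* n)]$, and since $\log^* n$ grows slower than any positive power of $n$ the bound $(n + \log^* n)\cdot \log^* n$ is $o(n^2)$, so the space hierarchy theorem gives $\dspace[\bigO((n + \log^* n)\cdot \log^* n)] \subsetneq \dspace[\bigO(n^2)] \subseteq \pspace$. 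Putting the pieces together: assuming $\ph = \pspace$, $\baal[\bigO(\log^* n)] \subsetneq \pspace = \paal[\bigO(\log^* n)]$, which is the claim. I do not expect a real obstacle here; the only points that need care are (i) applying the inclusion $\paal[\bigO(\log^* n)] \subseteq \pspace$ with the $\bigO(\cdot)$ form of the time bound rather than just $\log^* n + \bigO(1)$ — which is harmless because the $\bigO(p(n))$-step \aarm-simulation of the booster step absorbs any constant-factor blow-up in the number of rounds — and (ii) making sure the strict inclusion $\baal[\bigO(\log^* n)] \subsetneq \pspace$ is genuinely unconditional, resting only on the space hierarchy theorem and not on any presumed separation involving \p.
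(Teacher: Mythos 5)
Your proof is correct and follows essentially the same route as the paper: under the hypothesis, Theorem~\ref{the:ph_paal_pspace} gives $\paal[\bigO(\log^* n)] = \pspace$, while the $\dspace[\bigO((n+\log^* n)\log^* n)]$ upper bound from Theorem~\ref{thm:baaltimedspace} together with the space hierarchy theorem (the argument behind Proposition~\ref{prop:psubsetpspace}) places $\baal[\bigO(\log^* n)]$ strictly below $\pspace$. If anything, your write-up is more explicit than the paper's one-line observation about why the $\baal$ class sits unconditionally and properly inside $\pspace$, but the underlying mechanism is identical.
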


\begin{theorem}\label{thm:aalpaalreg}
If $f$ is monotonically increasing and unbounded, then $\baal[\log^* n\linebreak[3] - f(n)] = \reg$.
\end{theorem}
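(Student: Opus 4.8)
The plan is to prove the two containments separately; the inclusion $\baal[\log^* n - f(n)]\subseteq\reg$ carries all the content, while $\reg\subseteq\baal[\log^* n - f(n)]$ is routine. For the routine direction: a regular $L$ is decided in a single step by the \aarm\ whose set $A$ contains the bounded automatic relation $\{(w,w):w\in L\}$ (automatic because $\{conv(w,w):w\in L\}$ is regular) and whose set $B$ is empty, so that Anke wins in one move exactly on inputs in $L$; hence $L\in\baal[1]$, and a one-step machine decides a fortiori in time $\log^* n - f(n)$ whenever that quantity is at least $1$, which is implicitly assumed. (If one instead lets $\log^* n - f(n)$ drop below $1$, then no word of such a length can be accepted, and the real content of the statement is precisely the other inclusion.)

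For $\baal[\log^* n - f(n)]\subseteq\reg$, fix an \aarm\ $M$ deciding $L$ in at most $g(n):=\log^* n - f(n)$ steps, and let $c$ be the number of states of an automaton that guesses and applies one instruction of $M$. First I would reuse the ``wins within $k$ steps'' construction from the proof of Theorem~\ref{the:ph_paal_pspace}: for each $k$ there is a dfa $E_k$ with at most $\tower(k+1,c)$ states that accepts $w$ iff Anke can force a win on $w$ within $k$ moves (she moving first), and by the tower comparison established in that same proof one may take $|E_k|\le\tower(k+C,1)$ for a constant $C$ depending only on $M$. Since $M$ decides $L$ in time $g$, for every $w$ of length $n$ we have $w\in L$ iff Anke wins within $g(n)$ moves iff $w\in L(E_{g(n)})$; hence
\[
L=\{\,w\in\Sigma^*:\ w\in L(E_{g(|w|)})\,\}.
\]
Next I would exploit that $f$ is monotone and unbounded. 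For every $m$ there is $N_m$ with $f(n)\ge C+m$ for $n\ge N_m$, so $|E_{g(n)}|\le\tower(\log^* n - f(n) + C,1)\le\tower(\log^* n - m,1)$ for $n\ge N_m$; in particular, for all large $n$ the dfa deciding the length-$n$ slice of $L$ has fewer than $\tower(\log^* n - 1,1)<n$ states, and its size stays below every fixed iterated logarithm of $n$. Moreover $\log^*$ is constant and $f$ non-decreasing on each tower block $(\tower(j-1,1),\tower(j,1)]$, so $g$ is non-increasing there and rises by at most $1$ at block boundaries; writing $A_k=\{n:g(n)=k\}$, a length $n\in A_k$ lying in block $j$ satisfies $j=k+f(n)$, so for large $k$ every such $j$ exceeds $k+C$, and therefore every member of $A_k$ exceeds $\tower(k+C,1)\ge|E_k|$. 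The sets $A_k$ together with $\{n:g(n)\le 0\}$ (on which $L$ is empty) cover $\natnum$, and on $\Sigma^{A_k}$ the language $L$ coincides with $L(E_k)$.

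The remaining step -- which I expect to be the main obstacle -- is to conclude that $L$ is regular from the above. The intuition is that, because $g$ can grow only with the extreme slowness forced by $g\le\log^* n - f(n)$, every automaton $E_k$ is applied only to words vastly longer than $E_k$ itself, over a length set $A_k$ that sits inside tower blocks of index exceeding $k+C$ and is a bounded union of intervals; so although infinitely many distinct $E_k$ occur as the input length grows, none of them is ever ``large enough to matter'' at the lengths where it is used. Making this precise is the delicate part: the plan is a Myhill--Nerode argument showing that equal-length words that are not separated by the relevant $E_k$'s are $L$-equivalent, and then bounding, using $|E_k|\le\tower(k+C,1)<\min A_k$ and the slow growth of $g$, the number of equivalence classes; the finitely many short lengths and the lengths with $g(n)\le 0$ add only a finite, hence regular, sublanguage, and $\reg$ is closed under finite union, which finishes the proof. (If ``monotonically increasing'' is meant strictly, this last step is unnecessary: then $f(n)\ge f(1)+n-1$, so $g(n)<1$ for all but finitely many $n$, $L$ is finite, and $\subseteq\reg$ is immediate.)
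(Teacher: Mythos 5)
Your setup coincides with the paper's up to the point where the real work begins: both arguments invoke the tower-of-exponentials dfa construction from the proof of Theorem~\ref{the:ph_paal_pspace} to obtain, for each round bound $k$, a dfa of size $\tower(k+\bigO(1),1)$ deciding which player wins within $k$ rounds, and both observe that for $k\leq\log^*n-f(n)$ this size is eventually far below $n$ (indeed $\bigO(\log\log n)$) because $f$ is unbounded. The easy inclusion $\reg\subseteq\baal[\log^*n-f(n)]$ is fine. But the step you yourself flag as ``the main obstacle'' is exactly where the proof lives, and you do not supply it: from the slice-wise description $L=\bigcup_k\bigl(L(E_k)\cap\Sigma^{A_k}\bigr)$ with infinitely many distinct automata $E_k$, each small relative to the lengths at which it is used, regularity simply does not follow. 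A language can be decided on every length slice by an automaton of size $\bigO(\log\log n)$ and still fail to be regular, and your Myhill--Nerode sketch gives no mechanism for handling extensions $z$ that carry a word of length in $A_k$ into a length belonging to a different slice $A_{k'}$; the number of candidate equivalence classes you would have to control grows with $k$, since the $E_k$ themselves grow without bound. So the proposal, as written, has a genuine gap rather than a deferred technicality.

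The missing idea in the paper is a pumping/emptiness argument that collapses the time bound to a constant, after which regularity is immediate. Fix a sufficiently large $n$ and set $k=\log^*n-f(n)$. The set $U_k$ of words on which \emph{neither} player wins within $k$ rounds is recognised by a dfa obtained from the two tower dfas, of size smaller than $n$. If $U_k$ were nonempty it would contain a word of length below $n$ (pump down, or take a shortest accepted word); but by the decidability hypothesis every word of length $m<n$ is settled within $\log^*m-f(m)$ rounds, which the paper argues (using the monotonicity of $f$ and of $\log^*$) does not exceed $k$ --- a contradiction. Hence $U_k=\emptyset$: \emph{every} word, of arbitrary length, is decided within the fixed constant $k$ rounds, so $L\in\baal[\bigO(1)]$ and a single dfa of constant size decides all of $L$. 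Without some version of this ``the round bound is actually constant'' collapse, the slice-wise family $(E_k)_k$ cannot be assembled into one automaton, which is precisely where your argument stops. (Your closing parenthetical reading of ``monotonically increasing'' as strict, forcing $f(n)\geq f(1)+n-1$, is not the intended interpretation --- the paper uses ``increasing and unbounded'' in the non-strict sense elsewhere, e.g.\ for $\log(\log^*n)$ --- so it cannot be used to dismiss the hard direction.)
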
 

\begin{proof}
Assume that there is an \aarm\ such that for each word $w$ there is either for Anke or for Boris a winning strategy of $\log^*n - f(n)$ steps. Then by the tower lemma, the resulting size of the dfa is $\bigO(\log\log n)$ for almost all $n$ and input words of length $n$. Thus the combined two dfas have at most size $poly(\log \log n)$ and there is no word $w$ on which not exactly one accepts in the given time. Now assume that for a given dfa of sufficient large $n$, there is a word $w$ where neither player succeeds in $\log^*n - f(n)$ rounds, where the $n$ is fixed. Due to the pumping lemma, on words of arbitary length with this property, one can pump down these words until they have size below $n$. However, such short words with this property do not exist by assumption.  Thus for this fixed $n$, all words of arbitrary length are accepted by computations of length $\log^*n - f(n)$. Thus the language is actually in $\baal[\bigO(1)]$ and in \reg.
\end{proof}

\section{Variants of \aarm s and \paarm s}

We conclude by discussing the computational power of
several variants of \linebreak[4]\aarm s and \paarm s.  The first variant,
which was alluded to earlier, is the use of {\em unbounded} automatic relations as update instructions.  More precisely, this means that
for the update automaton, there is no constant $c$ such that for any
given input parameter, the difference between the length of each possible output and that of the input parameter is at most $c$.      
In contrast to Theorem \ref{thm:baalre}, one can show that if an \aarm\ is allowed to use unbounded automatic relations as update instructions, then the class of languages recognised in time $\log^*n + \bigO(1)$ is 
precisely the class \ethree\ -- the class $\dtime(2^n) \cup \dtime\left(2^{2^n}\right) \cup \dtime\left(2^{2^{2^n}}\right) \cup \ldots$ of all sets computed by an elementary recursive function.   

\begin{theorem}\label{thm:ubaalelem}
Let $\ubaal[f(n)]$ denote the class of languages recognised in time
$f(n)$ by an \aarm\ that uses unbounded automatic relations as update instructions.  Then $\ethree = \ubaal[\log^*n + \bigO(1)]$. 
\end{theorem}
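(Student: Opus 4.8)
The plan is to prove the two inclusions separately, in both directions bootstrapping from the machinery already available for $\baal[\log^*n+\bigO(1)]$; the point at which unboundedness does real work is different on the two sides.

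For $\ethree\subseteq\ubaal[\log^*n+\bigO(1)]$, fix a language $L$ decided by a deterministic Turing machine $N$ in time $T(n)=\exp^{(k)}(n)$ for some constant $k$, with all configurations padded to a common length $S(n)\le T(n)$. The key observation is that, unlike a bounded relation (which lengthens the register by at most an additive constant), an unbounded automatic relation can send $x$ to $x\cdot z$ for an \emph{arbitrary} $z\in\Gamma^*$ -- the set $\{conv(x,xz):x,z\in\Gamma^*\}$ is regular -- so a length-$T(n)^2$ certificate can be produced in one move rather than $T(n)^2$ moves. Accordingly, Anke's first move appends a string of length at most $T(n)^2$ that she claims is the full computation tableau of $N$ on $w$. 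Boris then has a single move to pick a challenge: (i) the first row is not the start configuration of $N$ on $w$; (ii) the last row is not accepting; (iii) a marked pair of consecutive rows violates $N$'s transition function; (iv) the tableau is malformed (rows not all of length $S(n)$, missing separators, etc.). Challenge (iii) is resolved by Lemma~\ref{lem:turinglogstar}, and the remaining challenges by the Log-Star Lemma (Lemma~\ref{lem:logstar1}) together with constant-size dfa checks, always applied to strings of length at most $T(n)^2$; since $\log^*(T(n)^2)=\log^*n+\bigO(1)$ for fixed $k$, every challenge is settled in $\log^*n+\bigO(1)$ further rounds, so the whole game lasts $\log^*n+\bigO(1)$ rounds. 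If $w\in L$, Anke writes the genuine accepting tableau and no challenge of Boris succeeds; if $w\notin L$, any tableau Anke writes is structurally invalid (challenges (i), (iii), (iv)) or ends in a rejecting row (challenge (ii)), so Boris wins. Hence $L\in\ubaal[\log^*n+\bigO(1)]$.

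For $\ubaal[\log^*n+\bigO(1)]\subseteq\ethree$, I would adapt the tower argument from the proof of Theorem~\ref{the:ph_paal_pspace} (note that the finite-branching argument behind Theorem~\ref{thm:baalre} is no longer available, since an unbounded relation offers infinitely many successors). Let $M$ be a $\ubaal$ deciding $L$ in $\log^*n+\bigO(1)$ steps and let $c$ bound the total number of states of the finitely many automata defining $M$'s instructions. One shows by induction on $k$ that the predicate ``Anke, moving first, wins within $k$ steps from register content $x$'' is computed by a dfa of size at most $\tower(k+\bigO(1),c)$, and likewise for Boris. The induction step for one further existential (resp.\ universal) move proceeds as in Theorem~\ref{the:ph_paal_pspace}: form the product of the relevant instruction automaton -- which reads $conv(x,y)$ -- with the size-$\tower(k+1,c)$ dfa for the $k$-step predicate, now read on the guessed successor $y$; project out $y$ (a constant number of $\varepsilon$-transition gadgets handles the tails where $|y|$ and $|x|$ differ); and determinize, complementing in the universal case. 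This yields a dfa of size $2^{\bigO(c\cdot\tower(k+1,c))}=\tower(k+2,c')$. Using the relation $\tower(d+c'',1)>\tower(d,c')$ established in the proof of Theorem~\ref{the:ph_paal_pspace}, together with $\tower(\log^*n,1)<2^n$, the dfa for $k=\log^*n+\bigO(1)$ has size at most $\tower(\log^*n+\bigO(1),1)\le\exp^{(j)}(n)$ for a constant $j$, i.e.\ elementary in $n$. Since $L=\{w:\text{Anke, moving first, wins within }\log^*|w|+\bigO(1)\text{ steps from }w\}$, one decides $L$ by computing $k=\log^*|w|+\bigO(1)$, constructing this dfa in elementary time (the cost is dominated by the last of the $\log^*n+\bigO(1)$ determinizations), and running it on $w$; the resulting algorithm runs in elementary time, so $L\in\ethree$.

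The main obstacle is the second inclusion, specifically verifying that the dfa-size induction really survives the passage to unbounded relations: one must check that projecting the guessed $y$ out of the convolution automaton -- simultaneously accounting for the part of $y$ beyond position $|x|$ and the part of $x$ beyond position $|y|$ -- costs only $\bigO(1)$ additional states, so that the tower of exponentials still grows with just a constant multiplicative overhead per round, and that the entire (elementary-sized) dfa can genuinely be produced within elementary resources even though the number of rounds $k$ itself grows with $n$. The only subtlety on the first inclusion is that the Log-Star sub-protocols, although stated for bounded \aarm s, run verbatim as sub-games inside a $\ubaal$ computation, and that their length must be measured as $\log^*$ of the (now enormous) current register length rather than of the original input length.
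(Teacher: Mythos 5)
Your proposal is correct and follows essentially the same route as the paper: the forward inclusion by having Anke produce the entire (elementary-length) computation tableau in a single unbounded step and then verifying it with Lemma~\ref{lem:turinglogstar} and the Log-Star Lemma in $\log^*$ of the tableau length, and the reverse inclusion by reusing the $\tower(k+\bigO(1),c)$ dfa induction from the proof of Theorem~\ref{the:ph_paal_pspace} and observing that a tower of height $\log^*n+\bigO(1)$ is elementary in $n$. Your extra care about projecting out the (possibly much longer) guessed successor $y$ in the convolution automaton is a valid refinement of a point the paper passes over silently, but it does not change the argument.
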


\begin{proof}
We first show that $\ethree \subseteq \ubaal[\log^*n + \bigO(1)]$.
The main idea of the proof is based on Lemma \ref{lem:turinglogstar}.
As shown in the lemma, an \aarm\ can verify the validity of a transition between two Turing machine configurations in $\log^* n + \bigO(1)$ steps, where $n$ is the length of the shorter configuration.
Suppose that the running time of a Turing machine $M$ is at most $2^n$ for any given input of length $n$. The set of all pairs $(x,y)$ such that $y$ represents a (possibly partial) computation history of $M$ is an automatic relation, and so it can be used to update the (unbounded) \aarm\ register. For any given input, Anke guesses the full computation history of the Turing machine; this can be represented by a string of length $\bigO(2^n)$.  An argument similar to that in the proof of Lemma \ref{lem:turinglogstar} then shows that an \aarm\ can verify in $\log^*\left(\bigO(2^n)\right) + \bigO(1) \leq \log^*\left(2^{2^n}\right) + \bigO(1) = \log^*n + 2 + \bigO(1) = \log^*n + \bigO(1)$ steps whether or not the guessed computation history is valid.  A similar argument applies if the running time of the Turing machine is $2^{2^{2^{\ldots}}}$ for any given number of exponentiations.  One can clock the steps done with a computation of $\log^* n$ for input size $n$ (taking $\log^*n$ steps) and then run a constant number of additional steps, ending with either Anke winning or Boris winning if the timebound is not kept.

Next, we show that $\ubaal[\log^*n + \bigO(1)] \subseteq \ethree$. 
Given any language recognised by an unbounded \aarm\ with time complexity 
$\log^*n + \bigO(1)$, as shown in the proof of Theorem \ref{the:ph_paal_pspace}, if $c$ is the number of states of the update automaton, then a dfa of size at most $\tower(\log^*n + \bigO(1),c)$ recognises whether or not a given player wins within $\log^*n + \bigO(1)$ steps.  This dfa can be computed in 
$\tower(\log^*n + \bigO(1),c)$ time, and so the given language belongs to
\ethree.  
\end{proof}

\begin{remark}
Suppose that $f$ is \ls\ computable, increasing and unbounded;
an example of such a function is $\log(\log^* n)$.
Then $\baal[\log^*n + f(n) + \bigO(1)]$ contains the class
$\nspace[\log n \cdot f(n)]$ and $\uaal[\log^*n + f(n) + 1]$
contains a language not in \ethree. The first result shows
that under the assumption $\ph = \ls$, the class
$\baal[\log^*n + f(n) + \bigO(1)]$ contains a language outside
$\ph$ and but still in $\pspace$.
\end{remark}

\medskip
\noindent
One might be interested in $\paarm$-programs for which the number of alternations 
is constant but a player can do more than one automatic function 
step in a single turn. In particular, we can show that if, after the booster step the number of alternation is just two, then every language 
in $\nll$, which is closed under complementation, 
can be recognised in $\bigO(\log \log n)$ steps. After padding the input,
Boris will run the nondeterministic logspace computation where he 
uses location number for the input cell read in a step. Boris can
code this computation in the padded space, where each configuration step
is of logarithmic size.
Now Anke verifies the computation, which can be done in $\bigO(\log \log n)$ 
deterministic steps. The idea is to 
guess the two symbols from two successive logspace computation steps which 
contradict 
a Turing machine computation (that is, there are errors in the Turing machine computation). After that, Anke verifies that the positions of the error symbols are indeed correct by marking off 
every other 
symbol in each logspace sized protocol. The marking off
of steps 
is repeated until all symbols except the errors in the logspace sized protocols are marked off, while maintaining 
the same parities of the errors' positions. 
It is easy to see that Anke needs $\bigO(\log \log n)$ marking off steps in total, and one more 
step to confirm the error.

\medskip
\noindent
Another possible variant of an \aarm\ uses transducers as basic operations instead of bounded automatic relations. We define an Alternating Transducer Register Machine (ATRM) and a Polynomial-Size Padded Alternating Transducer Register Machine (PATRM) similarly to an $\aarm$ and a $\paarm$ respectively. When using bounded automatic relations, the 
``bottleneck'' of the computational power, as shown in this paper, lies in the string comparison, which needs 
$\bigO(\log^* n)$ steps by the Log-Star Lemma and its extension. This, however, does not apply when transducers are used as the string comparison can be done in constantly many transducer steps. Thus, we can show that not only an NP-complete problem is recognised by ATRM in constantly many steps but also PH is contained in PATRM in constantly many steps.
In fact, denoting the class of languages recognised by an ATRM in time
$t(n)$ by $\uatl[t(n)]$, one can show the following.

\begin{theorem}\label{thm:uatlah}
$\uatl[\bigO(1)]$ is the union of the classes in the arithmetical hierarchy.
\end{theorem}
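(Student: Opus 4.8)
\medskip
\noindent
The plan is to prove the two inclusions separately, in both cases exploiting the one feature that distinguishes a transducer from a bounded automatic relation: a single transducer step may produce output that is \emph{arbitrarily} longer than its input (e.g.\ via input-$\varepsilon$ moves that emit symbols), so in one move a player can pad the register with, or outright guess, a string of any finite length; and, dually, the string-comparison bottleneck of the Log-Star Lemma disappears, since a transducer can test two (convoluted) strings for equality, and verify a single Turing-machine transition step, in $\bigO(1)$ rounds rather than in $\log^*n+\bigO(1)$ rounds. I will take this ``$\bigO(1)$-step comparison'' as the transducer analogue of Lemma \ref{lem:turinglogstar}; making it precise is the main technical point, and it rests on the fact that transducers, unlike automatic relations, may freely insert and delete symbols, so the register can be kept in convoluted form and an equality or local-transition check becomes a regular test performed inside a single update (followed by a constant number of challenge rounds).

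For ``arithmetical $\subseteq \uatl[\bigO(1)]$'', fix $n$ and a set in $\Sigma^0_n$, say $x \in A \iff \exists y_1\,\forall y_2 \cdots Q_n y_n\, R(x,y_1,\ldots,y_n)$ with $R$ a total decidable relation computed by a Turing machine $M_R$. I would build an ATRM in which, during its first $n$ rounds, Anke and Boris alternately append (an encoding of) $y_1,\ldots,y_n$ to the register; since each $y_i$ may be coded by an arbitrarily long string, these are legal transducer moves. Once the witnesses are fixed, the verifier (the player who should win precisely when $R$ holds) guesses, in one further move, the \emph{entire halting computation history} of $M_R$ on $(x,y_1,\ldots,y_n)$ -- a finite string, because $R$ is total -- and the opponent then either challenges a boundary condition (the first configuration is the correct start configuration on $(x,\vec y)$; the last is accepting) or challenges a specific step at which the history is not locally valid; by the transducer analogue of Lemma \ref{lem:turinglogstar} each such challenge is resolved in $\bigO(1)$ more rounds. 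This uses $n + \bigO(1) = \bigO(1)$ rounds. The usual correspondence between alternating machines and quantifier prefixes -- a forced trivial first move for Anke when the leading quantifier should be universal (so as to get $\Pi^0_n$), and merging a witness-move with the history-guess when parity requires it -- completes the construction, so every $\Sigma^0_n$ and $\Pi^0_n$ set, and hence their union, lies in $\uatl[\bigO(1)]$.

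For the converse, ``$\uatl[\bigO(1)] \subseteq$ arithmetical'', let $M$ be an ATRM deciding $L$ with at most $c$ rounds on every input, $c$ a constant. Two decidability facts are available: the domain of a rational (transducer) relation is effectively regular, so ``the player to move can make some move from register contents $r$'' is decidable; and membership of $(r,x)$ in a given transducer relation is decidable, so ``$x$ is a legal move from $r$'' is decidable. Unfolding the depth-$\le c$ game from input $w$ therefore expresses ``Anke wins'' as a formula $\exists x_1\,\forall x_2 \cdots Q x_c\, \varphi(w,x_1,\ldots,x_c)$ with at most $c$ alternations of quantifiers over $\Gamma^*$ and with $\varphi$ a finite Boolean combination of the decidable predicates above (encoding that the displayed move sequence is legal and ends with the opponent stuck within $c$ rounds; the promise that every play terminates within $c$ rounds is what makes this finite formula capture the winner correctly). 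Identifying strings over a finite alphabet with natural numbers, this places $L$ in $\Sigma^0_c \cup \Pi^0_c$, hence in the arithmetical hierarchy.

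The step I expect to be the real obstacle is the transducer version of the Log-Star Lemma and of Lemma \ref{lem:turinglogstar}: carefully choosing the register encoding (convoluted contents, markers for the two strings or the two configurations being compared, and for the old head position) so that a transducer verifies equality and a single Turing step in a \emph{constant} number of alternating rounds, together with checking that the ``short challenge'' discipline keeps every intermediate instance of the required size. Everything else is a routine adaptation of the proof that $\ubaal[\log^*n+\bigO(1)] = \ethree$ (Theorem \ref{thm:ubaalelem}) and of the standard quantifier analysis of bounded-alternation games.
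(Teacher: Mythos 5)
Your converse direction (unfolding the depth-$c$ game into a $c$-quantifier formula over decidable predicates -- legality of a move and emptiness of the move set are decidable because transducer relations have effectively regular domains) is sound and in fact more explicit than what the paper offers; the paper essentially asserts this half via ``the alternating closure''. The forward direction, however, is where your route diverges from the paper's and where it has a genuine gap. The paper does not prove any constant-round verification lemma at all: it invokes \cite[Lemma 3.1]{Mateescu95}, which states that the recursively enumerable languages are exactly those obtained by constantly many transducer steps with existential guessing, and then closes under alternation. That citation packages precisely the content you are trying to rebuild by hand, namely that an existentially guessed object can be \emph{verified} within a constant number of transducer applications.

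Your proposed substitute -- a ``transducer analogue of Lemma \ref{lem:turinglogstar}'' in which the register is ``kept in convoluted form'' so that equality of two strings or local validity of a Turing step ``becomes a regular test performed inside a single update'' -- does not go through as sketched. The relation $\{(u\$v,\,conv(u,v)) : u,v\in\Sigma^*\}$ is not rational: a one-way finite-state transducer reading $u\$v$ cannot align $v$ with $u$, since that requires either remembering all of $u$ or shifting $v$ by $|u|$ positions, and a standard pumping argument on an accepting run rules it out. Guessing the second track while reading $u$ only reduces the equality $u=v$ to a fresh equality $v'=v$ of the same size, so no constant number of rounds is gained; and without this, your verifier cannot check in $\bigO(1)$ rounds that consecutive configurations of the guessed halting history of $M_R$ agree off the head position. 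So the step you yourself flag as ``the real obstacle'' is not a routine adaptation but the entire mathematical content of the inclusion, and it is exactly what the appeal to \cite{Mateescu95} supplies in the paper's proof. To repair your argument you should either cite that lemma directly (reducing your $R(x,\vec y)$ to a single existential transducer-verifiable predicate and keeping only your $n$ alternating witness-guessing rounds on top), or give an honest proof of a constant-round equality protocol, which will require a mechanism other than producing convolutions in one rational transduction.
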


\begin{proof}
We first note that the class R.E.\ of recursively enumerable languages
coincides with the class of languages obtained by applying constantly many
transducer steps with existential guessing \cite[Lemma 3.1]{Mateescu95}.
The alternating closure then gives that $\uatl[\bigO(1)]$ is equal to the
whole class of languages in the arithmetical hierarchy.  
\end{proof}

\medskip
\noindent 
Note that the arithmetical hierarchy consists of undecidable sets (except for the lowest level). Thus,
owing to Theorem \ref{thm:uatlah}, we believe that transducers 
are too powerful to be used as basic operations and 
that the use of bounded automatic relations as basic operations 
gives a more 
appropriate model of computation. 

\medskip
\noindent
The main results on complexity classes defined by \aarm s are
summed up in Figure \ref{fig:summary1} while those on complexity 
classes defined by \paarm s are summed up in Figure \ref{fig:summary2}.
For any function $f$, 
$\baal[f(n)]$ denotes the class of languages recognised by an 
\aarm\ in $f(n)$ time. 
Arrows are labelled with references to the corresponding results or definitions; well-known inclusions can be found in standard textbooks \cite{HU79,Sipser13} or the Complexity Zoo (\url{https://complexityzoo.net/Complexity_Zoo}).
Table \ref{table:summary} also summarises the main results. 

\vspace*{-.2cm}

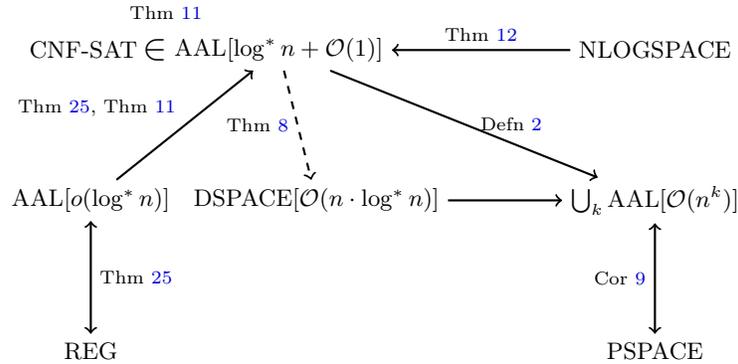
\begin{figure}[H]
\begin{center} \setlength{\unitlength}{1in}
\begin{tikzpicture}

\node at (-4.5,1) (baalologstarn) {$\baal[o(\log^* n)]$};
\node at (-4.5,-1) (reg) {\reg};
\node at (-2,3) (baallogstar) {$\baal[\log^* n+\bigO(1)]$};
\node at (-4.6,3) (sat) {\cnfsat};
\node at (-3.65,3) (satinaal) {$\Scale[1.5]{\in}$};
\node at (-3.5,3.5) (satinaalref) {\scriptsize{Thm \ref{the:sat_baal}}};
\node at (3,3) (nlogspace) {$\nll$};
\node at (3,1) (baalpoly) {$\bigcup_k \baal[\bigO(n^k)]$};
\node at (3,-1) (pspace) {\pspace};
\node at (-1.5,1) (dspacelogstarntimesn) {$\dspace[\bigO(n\cdot \log^* n)]$};


\draw[->,thick]
(baalologstarn) -- (baallogstar) node[midway,above left] {\scriptsize{Thm \ref{thm:aalpaalreg}, Thm \ref{the:sat_baal}}};

\draw[->,thick,dashed]
(baallogstar) -- (dspacelogstarntimesn) node[midway,left] {\scriptsize{Thm
\ref{thm:baaltimedspace}}};

\draw[->,thick]
(nlogspace) -- (baallogstar) node[midway,above] {\scriptsize{Thm
\ref{thm:nlsinaal}}}; 

\draw[->,thick]
(dspacelogstarntimesn) -- (baalpoly);

\draw[<->,thick]
(baalologstarn) -- (reg) node[midway,right] {\scriptsize{Thm \ref{thm:aalpaalreg}}};

\draw[<->,thick]
(baalpoly) -- (pspace) node[midway,left] {\scriptsize{Cor
\ref{cor:aalpolyequalspspace}}};

\draw[->,thick]
(baallogstar) -- (baalpoly) node[midway,right=1mm] {\scriptsize{Defn \ref{defn:aal}}};




\end{tikzpicture}
\end{center}
\caption{Relationships between complexity classes/\cnfsat. A solid arrow from $X$ to $Y$ 
means that $X$ is a proper subset of $Y$.  A double-headed solid arrow between $X$
and $Y$ means that $X$ is equal to $Y$.  If $X$ is a subset
of $Y$ but it is not known whether they are equal sets,
then the arrow is dashed.}
\label{fig:summary1}
\end{figure}   

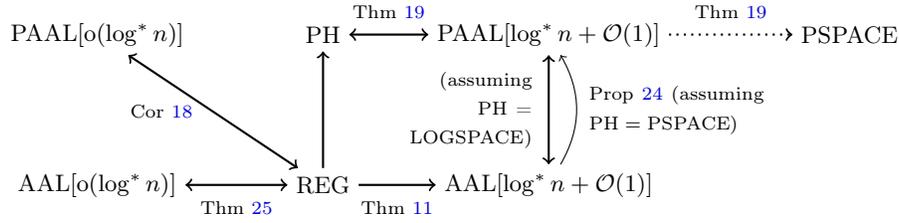
\begin{figure}[H] 
\begin{center} \setlength{\unitlength}{1in} 
\begin{tikzpicture}

\node at (-3,0) (ph) {\ph};
\node at (0,-2) (baallogstar) {$\baal[\log^* n + \bigO(1)]$};
\node at (0,0) (paallogstar) {$\paal[\log^* n + \bigO(1)]$};
\node at (-6,-2) (baalologstar) {$\baal[\smallO(\log^* n)]$};
\node at (-6,0) (paalologstar) {$\paal[\smallO(\log^* n)]$};
\node at (4,0) (pspace) {\pspace};
\node at (-3,-2) (reg) {\reg};

\draw[<->,thick]
(baallogstar) -- (paallogstar) node[midway,left=1mm, align=right, text width=1.8cm] {\scriptsize{(assuming \ph\ = \ls)}};

\path[->]  (baallogstar) edge [bend right]  node [midway,right=.5mm, text width=3cm] {\scriptsize{Prop
\ref{prop:aallogstarsubsetpaallogstar} (assuming $\ph = \pspace$)}} (paallogstar);

\draw[<->,thick]
(ph) -- (paallogstar) node[midway,above=1mm] {\scriptsize{Thm
\ref{the:ph_paal_pspace}}};

\draw[->,thick,dotted]
(paallogstar) -- (pspace) node[midway,above=1mm] {\scriptsize{Thm
\ref{the:ph_paal_pspace}}};

\draw[->,thick]
(reg) -- (baallogstar) node[midway,below=1mm] {\scriptsize{Thm
\ref{the:sat_baal}}};

\draw[<->,thick]
(paalologstar) -- (reg) node[midway,left=1.5mm] {\scriptsize{Cor \ref{cor:regularpaal}}};

\draw[<->,thick]
(baalologstar) -- (reg) node[midway,below=1mm] {\scriptsize{Thm \ref{thm:aalpaalreg}}};

\draw[->,thick]
(reg) -- (ph) node[midway,right=1mm]{~}; 

\end{tikzpicture}
\end{center}
\caption{Relationships between complexity classes.} 
\label{fig:summary2}
\end{figure}   

\begin{center}
\begin{table}[H]
\begin{tabular}{|c|c|c|c|}
\hline
\multirow{2}{*}{Step Numbers} & \multicolumn{3}{c|}{Criterion} \\ \cline{2-4}
\multirow{1}{*}{\makebox[2.8cm]{}} & \makebox[2.8cm]{\makecell{\baal}} & \makebox[2.8cm]{\makecell{\paal}} &\makebox[2.8cm]{\makecell{\uaal}}  \\
\hline
\multirow{1}{*}{\makebox[2.8cm]{\makecell{$\bigO(1),\,\smallO(\log^*n)$}}}
&\makebox[2.8cm]{\makecell{$=\reg$}}&\makebox[2.8cm]{\makecell{$=\reg$}}&\makebox[2.8cm]{\makecell{$=\reg$}}
\\ \hline
\multirow{1}{*}{\makebox[2.8cm]{\makecell{$\log^*n + \bigO(1)$}}}
&\makebox[2.8cm]{\makecell{$\supseteq \ls$}}&\makebox[2.8cm]{\makecell{$=\ph$}}&\makebox[2.8cm]{\makecell{$=\ethree$}}
\\ \hline
\multirow{3}{*}{\makebox[2.8cm]{\makecell{$poly(n)$ \\ ~ \\ ~}}}
&\makebox[2.8cm]{\makecell{$=\pspace$}}&\makebox[2.8cm]{\makecell{$=\pspace$}}&\makebox[2.8cm]{\makecell{\ethree \\ $\subset \uaal \subset$ \\ \efour}}
\\ \hline
\end{tabular}
\caption{Summary of results. \ethree\ denotes the class of all sets computed by an elementary function and \efour\ denotes the class of all sets computed by functions on level 4 of the Grzegorczyk Hierarchy.}\label{table:summary}
\end{table}
\end{center}

\end{document}